\newtheorem{remark}{\textnormal{\textbf{Remark}}}
\newtheorem{theorem}{\textnormal{\textbf{Theorem}}}
\newcommand{\bx}{{\bf x}}
\newcommand{\by}{{\bf y}}
\newcommand{\bz}{{\bf z}}
\newcommand{\bv}{{\bf v}}
\newcommand{\bw}{{\bf w}}
\title{\LARGE \bf
Optimal Sensor Gain Control for Minimum-Information Estimation of Continuous-Time Gauss-Markov Processes}
\author{Vrushabh Zinage$^{1}$ \and Takashi Tanaka$^{2}$ \and Valeri Ugrinovskii$^{3}$ % <-this % stops a space
\thanks{$^{1}$Vrushabh Zinage is a graduate student at the Department of Aerospace Engineering and Engineering Mechanics, University of Texas at Austin, Austin, Texas, 78712 
         {\tt\small vrushabh.zinage@utexas.edu }}%
 \thanks{$^{2}$Takashi Tanaka is an Assistant Professor at the Department of Aerospace Engineering and Engineering Mechanics, University of Texas at Austin, Austin, Texas, 78712       {\tt\small ttanaka@austin.utexas.edu}}
 \thanks{$^{3}$Valeri Ugrinovskii is a Professor at the School of Engineering and
 IT, The University of New South Wales, Canberra, Australia.{\tt\small v.ougrinovski@adfa.edu.au }}
 \thanks{V. Zinage and T. Tanaka were supported by NSF Award 1944318. V. Ugrinovskii was supported by the Australian Research Council under 
 Discovery Projects funding scheme (project DP200102945). }% <-this % stops a space
  }
\begin{document}

\bibliographystyle{IEEEtran} 

\maketitle
\thispagestyle{empty}
\pagestyle{empty}

%%%%%%%%%%%%%%%%%%%%%%%%%%%%%%%%%%%%%%%%%%%%%%%%%%%%%%%%%%%%%%%%%%%%%%%%%%%%%%%%
\begin{abstract}
We consider the scenario in which a continuous-time  Gauss-Markov process is estimated by the Kalman-Bucy filter over a Gaussian channel (sensor) with a variable sensor gain. The problem of scheduling the sensor gain over a finite time interval to minimize the weighted sum of the data rate (the mutual information between the sensor output and the underlying Gauss-Markov process) and the distortion (the mean-square estimation error) is formulated as an optimal control problem. A necessary optimality condition for a scheduled sensor gain is derived based on Pontryagin's minimum principle. 
For a scalar problem, we show that an optimal sensor gain control is of bang-bang type, except the possibility of taking an intermediate value when there exists a stationary point on the switching surface in the phase space of canonical dynamics. Furthermore, we show that the number of switches is at most two and the time instants at which the optimal gain must be switched can be computed from the analytical solutions to the canonical equations.
\end{abstract}

%%%%%%%%%%%%%%%%%%%%%%%%%%%%%%%%%%%%%%%%%%%%%%%%%%%%%%%%%%%%%%%%%%%%%%%%%%%%%%%%
\section{Introduction}
In this paper, we consider a controlled sensing problem in which a continuous-time linear-Gaussian random process is observed through a linear sensor whose sensor gain is strategically adjusted. The sensor gain is optimized to minimize the weighted sum of the minimum mean-square error (MMSE) attainable by a causal estimator and the mutual information (I) between the underlying random process and the sensor output. Generally, the former cost is reduced by adopting a large sensor gain, while this leads to an increased cost in the latter sense. Therefore, these two performance criteria are in a trade-off relationship (I-MMSE relationship \cite{duncan1970calculation,kadota1971mutual,guo2005mutual,palomar2005gradient}), and attaining a sweet spot by an optimal sensor gain is a nontrivial problem.

The problem studied in this paper is motivated by a practical scenario where a continuous-time source signal is encoded, compressed, and transmitted to a remote user where the signal is reproduced in a zero-delay manner (e.g., the event-based camera \cite{gallego2019event} for robotics applications). Assuming that binary codewords are used for communication, the trade-off between the bit-rate and the best attainable quality of the reproduced signal is of our natural interest. 
Although designing the optimal architecture (e.g., the optimal spatio-temporal sampling and encoding schemes) is a challenging task, it can be shown \cite{tanaka2017optimal} that the minimum bit-rate required for reproducing the signal within a given distortion criterion is lower bounded by the aforementioned mutual information required for reproducing the signal within the same criterion. 
Thus, the solution we present in this paper reveals a fundamental performance limitation of the sensor-encoder joint design for such a remote estimation problem. 

In the literature, related problems are studied in the context of sequential rate-distortion problems (the rate-distortion problems with causality constraints)
\cite{gorbunov1974prognostic, tatikonda2004stochastic,derpich2012improved,charalambous2013nonanticipative,tanaka2016semidefinite} and their applications to control problems \cite{kostina2019rate,tanaka2017lqg}.
Most of the existing works consider discrete-time source signals. 
For discrete-time Gauss-Markov sources and mean-square distortion criteria, \cite{tatikonda2004stochastic} shows that the optimal policy (test channel) to the sequential rate-distortion problem is linear. 
Based on this observation, \cite{tanaka2016semidefinite} shows that the Gaussian sequential rate-distortion problem is equivalent to an optimal sensor gain control problem, which was shown to be solvable by means of semidefinite programming.
A continuous-time counterpart of the same problem is considered in \cite{tanaka2017optimal}, although only infinite-horizon, time-invariant cases are studied there. 

\textit{Main contributions:} In this paper, we expand the problem considered in \cite{tanaka2017optimal} to finite-horizon, time-varying cases.
We first show that the optimal sensor gain control problem can be formulated as a nonlinear optimal control problem to which Pontryagin's minimum principle is applicable. 
To gain further insight into the optimal solution, we then restrict ourselves to scalar (single-dimensional) problems where a feasible control gain (control input) is constrained to be in $[0,1]$. 
We prove that the optimal sensor gain is of bang-bang type characterized by a switching surface in the phase space, except the possibility of taking an intermediate value if the canonical equation admits an equilibrium on the switching surface. {Further, we provide a method to compute the optimal sensor gain from the analytical expressions to the canonical equations. Consequently, the optimal sensor gain for the original sensor selection problem can be directly computed from the optimal control input. }
% \textcolor{red}{Trying to explain what it means for the original sensor selection problem as Prof. Valeri mentioned. I think the sentence in blue which I added needs some refining.}

  The paper is organized as follows. Section \ref{sec:problem_statement} introduces the problem statement followed by preliminaries in Section \ref{sec:preliminaries}. Section \ref{sec:main_result} discusses the main result of the paper followed by some concluding remarks and future work in Section \ref{sec:conclusion}.

Notation: We will use notation $x_{[t_1, t_2]}=\{x_t: t_1 \leq t \leq t_2\}$ to denote a continuous-time signal. Bold symbols like $\bx$ will be used to denote random variables. We assume all the random variables considered in this paper are defined on the probability space $(\Omega, \mathcal{F}, \mathcal{P})$. The probability distribution $\mu_\bx$ of an $(\mathcal{X}, \mathcal{A})$-valued random variable $\bx$ is defined by
\[
\mu_\bx(A)=\mathcal{P}\{\omega\in\Omega : \bx(\omega)\in A\}, \; \forall A \in \mathcal{A}.
\]
If $\bx_1$ and $\bx_2$ are both $(\mathcal{X}, \mathcal{A})$-valued random variables, the relative entropy from $\bx_2$ to $\bx_1$ is defined by
\[
D(\mu_{\bx_1} \| \mu_{\bx_2})=\int \log \frac{d\mu_{\bx_1}}{d\mu_{\bx_2}} d\mu_{\bx_1}
\]
provided that the Radon-Nikodym derivative $\frac{d\mu_{\bx_1}}{d\mu_{\bx_2}}$ exists, and $+\infty$ otherwise. The mutual information between two random variables $\bx$ and $\by$ is defined by
\[
I(\bx;\by)=D(\mu_{\bx\by} \| \mu_\bx \otimes \mu_\by)
\]
where $\mu_{\bx\by}$ and $\mu_\bx \otimes \mu_\by$ denote the joint and product distributions, respectively.

\section{Optimal Sensor Design and Problem Statement\label{sec:problem_statement}}
Similarly to \cite{tanaka2017optimal}, we consider the scenario of estimating an $n$-dimensional Ito process based on an $n$-dimensional noisy measurement.
Let $(\Omega,\mathcal{F}, \mathcal{P})$ be a complete probability space and suppose $\mathcal{F}_t\subset \mathcal{F}$ be a non-decreasing family of $\sigma$-algebras.
Let $(\bw_t, \mathcal{F}_t)$ and $(\bv_t, \mathcal{F}_t)$ be $\mathbb{R}^n$-valued, mutually independent standard Wiener processes with respect to $\mathcal{P}$.
The random process to be estimated is an $n$-dimensional Gauss-Markov process defined by
\begin{equation}
\label{eq:source}
    d\bx_t=A\bx_t dt + B d\bw_t, \;\;\; t\in [t_0, t_1]
\end{equation}
with $\bx_{t_0}\sim\mathcal{N}(0, X_0)$, where $X_0\succeq 0$ is a given covariance matrix and $t_0<t_1<\infty$ is the horizon length of the considered problem. We assume $(A,B)$ is a controllable pair of matrices. Let $C_t:[t_0, t_1] \rightarrow \mathbb{R}^{n\times n}$ be a measurable function representing the time-varying sensor gain.
Setting $\bz_t \triangleq C_t \bx_t$, the sensor mechanism produces an $n$-dimensional signal
\begin{equation}
\label{eq:measurement}
    d\by_t=\bz_t dt + d\bv_t, \;\;\; t\in [t_0, t_1]
\end{equation}
with $\by_{t_0}=0$. Based on the sensor output $\by_t$, the causal MMSE estimate $\hat{\bx}_t\triangleq \mathbb{E}(\bx_t|\mathcal{F}_t^{\by})$ is computed, where $\mathcal{F}_t^{\by} \subset \mathcal{F}$ denotes the  $\sigma$-algebra generated by $\by_{[0,t]}$. Computationally, this can be achieved by the 
Kalman-Bucy filter
\begin{equation}
\label{eqkbfilter}
d\hat{\bx}_t=A\hat{\bx}_tdt+X_tC_t^\top (d\by_t-C_t\hat{\bx}_tdt), \;\;\; t\in [t_0, t_1]
\end{equation}
with $\hat{\bx}_{t_0}=0$. Here, $X_t$ is the unique solution to the matrix Riccati differential equation
\begin{equation}
\label{eqriccati}
\dot{X_t}=AX_t+X_tA^\top-X_tC_t^\top C_tX_t+BB^\top, \;\; t\in [t_0, t_1]
\end{equation}
with the aforementioned given initial covariance matrix $X_{t_0}=X_0\succeq 0$. 
The overall architecture of the sensor mechanism is shown in Fig.~\ref{fig:sensor}.
\begin{figure}[h]
\centering
\includegraphics[width=8.5cm]{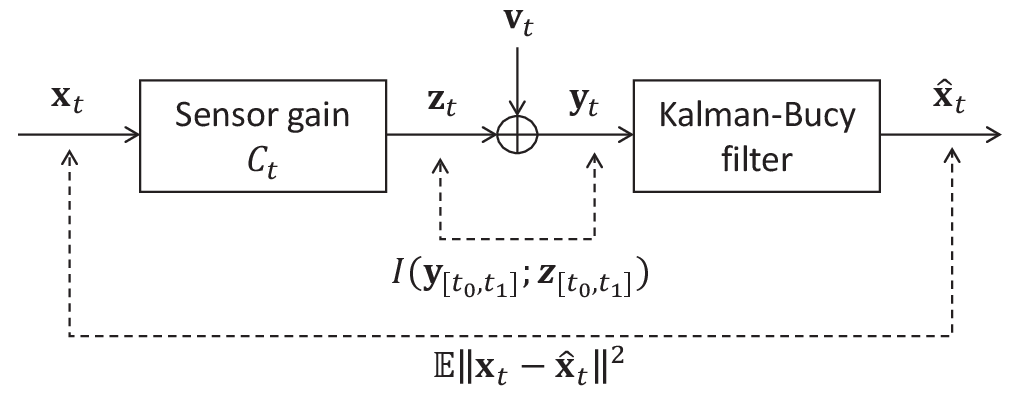}
\caption{Architecture of the sensor mechanism and the performance criteria.}
\label{fig:sensor}
\end{figure}

\subsection{Performance Criteria}
Unlike the standard filtering problem where the sensor gain $C_t$ is given, in this paper, we consider the problem of optimally choosing $C_t$ to achieve the minimum-information filtering. The optimality is characterized in terms of two performance measures, namely the mean-square error and the mutual information, as defined below.

\subsubsection{Mean-square error (MSE)}
The first criterion is the MMSE over the considered time horizon.
\begin{equation}
\label{eq:mmse}
 \int_{t_0}^{t_1} \mathbb{E} \| \bx_t-\hat{\bx}_t\|^2 dt =\int_{t_0}^{t_1} \mathrm{Tr} (X_t) \; dt.   
\end{equation}

\subsubsection{Mutual information}
The second performance criterion is the mutual information $I(\by_{[t_0, t_1]};\bz_{[t_0, t_1]} )$.
% \footnote{Shannon's rate-distortion theory implies that this quantity coincides with the minimum data rate at which $\bz_t$ is encoded to reproduce its copy $\by_t$ within a given distortion.} 
We use the following key result due to Duncan \cite{duncan1970calculation}:
\begin{theorem}
\label{theo:MI}
\normalfont Let the random processes $\by_{[t_0,t_1]}$ and $\bz_{[t_0,t_1]}$ be defined as above. Then
\[
I(\by_{[t_0,t_1]}; \bz_{[t_0,t_1]})=\frac{1}{2}\int_{t_0}^{t_1} \mathbb{E}\|C_t (\bx_t-\hat{\bx}_t)\|^2 dt.
\]
\end{theorem}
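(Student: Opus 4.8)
The plan is to evaluate $I(\by_{[t_0,t_1]};\bz_{[t_0,t_1]})$ via Radon--Nikodym derivatives against Wiener measure, in the spirit of Duncan's likelihood-ratio argument. Work on the path space of the observation $\by$ and let $W$ denote the standard Wiener measure on it. Because $\bz_t=C_t\bx_t$ is a functional of $\bw_{[t_0,t_1]}$ and $\bx_{t_0}$ alone, it is independent of the sensor noise $\bv_{[t_0,t_1]}$; hence the conditional law $\mu_{\by\mid\bz=z}$ is the law of $t\mapsto\int_{t_0}^t z_s\,ds+\bv_t$, and by the Cameron--Martin/Girsanov theorem $\frac{d\mu_{\by\mid\bz=z}}{dW}(y)=\exp\!\big(\int_{t_0}^{t_1}z_t^\top dy_t-\tfrac12\int_{t_0}^{t_1}\|z_t\|^2\,dt\big)=:\Phi(y,z)$. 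Writing the joint density as a ratio, $\frac{d\mu_{\by\bz}}{d(\mu_\by\otimes\mu_\bz)}(y,z)=\frac{d\mu_{\by\mid\bz=z}}{d\mu_\by}(y)=\Phi(y,z)\big/\tfrac{d\mu_\by}{dW}(y)$, so that
\[
I(\by_{[t_0,t_1]};\bz_{[t_0,t_1]})=\mathbb{E}\big[\log\Phi(\by,\bz)\big]-\mathbb{E}\Big[\log\tfrac{d\mu_\by}{dW}(\by)\Big].
\]

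Second, I would identify the marginal density $\frac{d\mu_\by}{dW}$ through the innovations representation of filtering: the observation is itself a ``signal plus white noise'' process whose signal is the causal estimate, so $\frac{d\mu_\by}{dW}(y)=\exp\!\big(\int_{t_0}^{t_1}(C_t\hat{\bx}_t)^\top dy_t-\tfrac12\int_{t_0}^{t_1}\|C_t\hat{\bx}_t\|^2\,dt\big)$, where $\hat{\bx}_t$ is the Kalman--Bucy estimate \eqref{eqkbfilter} viewed as a functional of the path $y$, $C_t\hat{\bx}_t=\mathbb{E}[\bz_t\mid\mathcal{F}_t^{\by}]$, and the innovations $d\boldsymbol{\nu}_t\triangleq d\by_t-C_t\hat{\bx}_t\,dt$ is an $\mathcal{F}_t^{\by}$-Brownian motion.

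Third, I would evaluate the two expectations by substituting the relevant decompositions of $d\by_t$. Inserting $d\by_t=\bz_t\,dt+d\bv_t$ into $\log\Phi(\by,\bz)=\int_{t_0}^{t_1}\bz_t^\top d\by_t-\tfrac12\int_{t_0}^{t_1}\|\bz_t\|^2\,dt$ and using that $\int\bz_t^\top d\bv_t$ has zero mean (as $\bz$ is $\mathcal{F}_t$-adapted and $\mathbb{E}\int_{t_0}^{t_1}\|\bz_t\|^2\,dt<\infty$ on the finite horizon, $\bx_t$ being Gaussian with bounded covariance) gives $\mathbb{E}\log\Phi(\by,\bz)=\tfrac12\mathbb{E}\int_{t_0}^{t_1}\|\bz_t\|^2\,dt$. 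Similarly, inserting $d\by_t=C_t\hat{\bx}_t\,dt+d\boldsymbol{\nu}_t$ and using $\mathbb{E}\int(C_t\hat{\bx}_t)^\top d\boldsymbol{\nu}_t=0$ gives $\mathbb{E}\log\tfrac{d\mu_\by}{dW}(\by)=\tfrac12\mathbb{E}\int_{t_0}^{t_1}\|C_t\hat{\bx}_t\|^2\,dt$. Subtracting, and applying the orthogonality property $\mathbb{E}\|\bz_t\|^2-\mathbb{E}\|C_t\hat{\bx}_t\|^2=\mathbb{E}\|\bz_t-C_t\hat{\bx}_t\|^2$ together with $\bz_t-C_t\hat{\bx}_t=C_t(\bx_t-\hat{\bx}_t)$, yields the asserted identity.

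The step deserving the most care, and the main obstacle, is making the measure-change arguments rigorous: establishing mutual absolute continuity of $\mu_{\by\mid\bz}$, $\mu_\by$ and $W$ and the validity of the Cameron--Martin/Girsanov densities (a Novikov-type or linear-growth estimate, available here since on the finite horizon the signal has uniformly bounded second moments), and --- the real crux --- justifying the innovations representation of $\frac{d\mu_\by}{dW}$, i.e.\ that $\boldsymbol{\nu}_t$ is a Brownian motion and that the causal filter estimate plays the role of the ``signal'' in the marginal model, which is precisely the classical innovations problem of nonlinear filtering. Once the densities are legitimate integrable objects, the vanishing of the stochastic integrals and the orthogonality principle are routine, and finiteness of the mutual information follows for free from the final formula.
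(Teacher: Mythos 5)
Your proposal is correct: it reconstructs Duncan's classical likelihood-ratio argument (conditional and marginal Radon--Nikodym derivatives against Wiener measure via Girsanov, the Kailath innovations representation of $d\mu_{\by}/dW$ with $C_t\hat{\bx}_t=\mathbb{E}[\bz_t\mid\mathcal{F}_t^{\by}]$ as the signal, and the orthogonality principle), and you correctly flag the innovations step and the integrability/absolute-continuity conditions as the only points needing rigor. The paper itself gives no proof --- it simply cites Duncan \cite{duncan1970calculation} --- so your argument coincides with the proof in the cited source.
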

Theorem~\ref{theo:MI} together with equation \eqref{eq:mmse} implies that the mutual information $I(\by_{[t_0,t_1]};\bz_{[t_0,t_1]})$ can also be expressed as 
\begin{equation}
    I(\by_{[t_0,t_1]};\bz_{[t_0,t_1]})=\frac{1}{2}\int_{t_0}^{t_1}\mathrm{Tr}(C_tX_tC_t^\top)dt. \label{eqn:mutual_information}
\end{equation}

\subsection{Problem Statement}
Minimizing the MSE \eqref{eq:mmse} and the mutual information \eqref{eqn:mutual_information} are in a trade-off relationship.\footnote{Suppose we choose $C_t=kC \;\; \forall t$ where $k\geq 0$ is a scalar and $(A,C)$ is an observable pair. As $k\rightarrow +\infty$, the MSE tends to zero while the mutual information tends to $+\infty$.} Thus, our task is to choose a measurable function $C_t:[t_0, t_1]\rightarrow \mathbb{R}^{n\times n}$ to minimize the weighted sum of them. Introducing a trade-off parameter $\alpha>0$,\footnote{The parameter $\alpha$ is the Lagrange multiplier in view of the hard-constrained version of the problem studied in \cite{tanaka2017optimal}.} the main problem we study in this paper is formulated as follows:
\begin{subequations}
\label{eqn:main_optimization_problem}
\begin{align}
   \underset{C_t}{\mathrm{min}}&\;\; \int_{t_0}^{t_1}\;\mathbb{E}\|\bx_t-\hat{\bx}_t\|^2dt+2\alpha I(\by_{[t_0,t_1]};\bz_{[t_0,t_1]}) \label{eqn:main_optimization_problem1} \\
   \text{ s.t. }&\;\;\; \mathrm{Tr}(C_t^\top C_t) \leq 1 \;\; \forall \;\;t\in [t_0, t_1]. \label{eqn:main_optimization_problem2}
\end{align}
\end{subequations}
The constraint \eqref{eqn:main_optimization_problem2} is introduced to incorporate the fact that the allowable sensor gain usually has an upper bound. Using \eqref{eq:mmse} and \eqref{eqn:mutual_information}, the problem \eqref{eqn:main_optimization_problem} can be written as
\begin{subequations}
\begin{align}
     \underset{C_t}{\mathrm{min}}\quad&\int_{t_0}^{t_1}\mathrm{Tr}(X_t)dt+\alpha \int_{t_0}^{t_1}\mathrm{Tr}(C_tX_tC_t^\top)dt\\
    \mathrm{s.t.}\quad&\dot{X}_t=AX_t+X_tA^\top-X_tC_t^\top C_tX_t+BB^\top \\
    &X_{t_0}=X_{0}\\
    &\mathrm{Tr}(C_t^\top C_t)\leq 1\quad\forall\;\; t\in[t_0,t_1].
\end{align}
\label{problem:optimization_problem}
\end{subequations}
Introducing $U_t\triangleq C_t^\top C_t\succeq 0$, this can be written as an equivalent optimal control problem with state $X_t$ and control input $U_t$:
\begin{subequations}
\begin{align}
    \underset{U_t}{\mathrm{min}}\quad&\int_{t_0}^{t_1}\mathrm{Tr}(X_t+\alpha U_tX_t)dt  \label{problem:optimization_problem_control_a}\\
    \mathrm{s.t.}\quad&\dot{X}_t=AX_t+X_tA^\top-X_tU_tX_t+BB^\top\\
    &X_{t_0}=X_{0}\\
    &U_t\geq0,\;\;\;\mathrm{Tr}(U_t)\leq 1\quad\forall\;\; t\in[t_0,t_1].\label{eqn:admissible_controls}
\end{align}
 \label{problem:optimization_problem_control}
\end{subequations}
The minimization is over the space of measurable functions $U_t: [t_0, t_1]\rightarrow \mathbb{S}_+^n (=\{M\in \mathbb{R}^{n\times n}: M \succeq 0\})$.
\begin{remark}
\normalfont The equivalence between
\eqref{problem:optimization_problem} and \eqref{problem:optimization_problem_control} implies that optimal solutions to the main problem \eqref{eqn:main_optimization_problem}, if exist, are not unique. Namely, if $U_t^*$ is an optimal solution to  \eqref{problem:optimization_problem_control}, then both $\bar{C}_t$ and $\tilde{C}_t$ are optimal solutions to \eqref{problem:optimization_problem} if $U_t^*=\bar{C}_t^\top \bar{C}_t = \tilde{C}_t^\top \tilde{C}_t$.
\end{remark}

\begin{remark}
\normalfont We assume that the dimension of the sensor output $\by_t$ matches the dimension of the underlying Gauss-Markov process $\bx_t$. 
This assumption is needed to avoid a technical difficulty that arises when $\by_t$ needs to be $m$-dimensional and $m<n$. To see this, suppose that the sensor gain $C_t$ to be synthesized in \eqref{eqn:main_optimization_problem} is required to be $\mathbb{R}^{m\times n}$-valued and $m<n$. Since this implies 
$\text{rank}(C_t^\top C_t)\leq m$, an additional non-convex constraint $\text{rank}(U_t)\leq m \; \forall t\in[t_0, t_1]$ must be included in \eqref{problem:optimization_problem_control} to maintain the equivalence between \eqref{problem:optimization_problem} and 
\eqref{problem:optimization_problem_control}. This type of difficulty has been observed in the sensor design problems in the literature, as in \cite{bansal1989simultaneous}. 
\end{remark}

\section{Optimality condition\label{sec:preliminaries}}
\subsection{Existence of optimal control}
{
We first remark that the following result due to Filippov \cite{filippov1962certain} is applicable to guarantee the existence of an optimal control for solving problem \eqref{problem:optimization_problem_control}:
 
% A standard approach to proving the existence of an optimal solution in the literature is based on Filippov’s theorem \cite{optimal_control_daniel_liberzon}. However, this theorem applies only when the class of admissible controls consists of continuous functions. Therefore this approach is not applicable in our case. In this paper, we proceed with an assumption that an optimal solution exists for \eqref{problem:optimal_control_problem}.

\begin{theorem}
\label{theo:filippov}
\normalfont \textbf{(Filippov's theorem): \cite{optimal_control_daniel_liberzon}}
Given a control system $\dot{x}_t=f(x_t,u_t)$ with $u_t\in\mathcal{U}$, assume that its solutions exist on a time interval $[t_0,t_1]$ for all controls and that for every $x_t$ the set $\{f(x_t,u_t):u_t\in\mathcal{U}\}$ is compact and convex. Then the reachable set $R_t(x_0)$ is compact for each $t\in[t_0,t_1]$.
\end{theorem}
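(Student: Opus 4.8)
The plan is to establish that $R_t(x_0)$ is compact by showing separately that it is bounded and that it is closed. For boundedness I would first exploit the hypothesis that, for each $x_t$, the velocity set $F(x_t)\triangleq\{f(x_t,u_t):u_t\in\mathcal{U}\}$ is compact, together with the standing assumption that trajectories exist on the whole interval $[t_0,t_1]$: combining local boundedness of $f$ with a standard continuation argument gives a linear growth estimate $\|\dot{x}_s\|\le m(1+\|x_s\|)$ valid along every admissible trajectory, with $m$ independent of the control, whence Gr\"onwall's inequality yields a uniform bound $\sup_{s\in[t_0,t_1]}\|x_s\|\le M$. Consequently $R_t(x_0)$ is contained in a fixed ball.

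For closedness I would take a sequence $\{\xi^k\}\subset R_t(x_0)$ with $\xi^k\to\xi^*$ and, for each $k$, pick an admissible trajectory $x^k_{[t_0,t]}$ generated by a measurable control $u^k_{[t_0,t]}$ with $u^k_s\in\mathcal{U}$, $x^k_{t_0}=x_0$, and $x^k_t=\xi^k$. By the a priori bound these curves are uniformly bounded, and since the velocities $\dot{x}^k_s=f(x^k_s,u^k_s)$ are also uniformly bounded, the family $\{x^k_{[t_0,t]}\}$ is equi-Lipschitz; Arzel\`a--Ascoli then extracts a subsequence (not relabeled) converging uniformly on $[t_0,t]$ to an absolutely continuous curve $x^*_{[t_0,t]}$ with $x^*_{t_0}=x_0$ and $x^*_t=\xi^*$, while $\{\dot{x}^k\}$, being bounded in $L^\infty$, may be assumed to converge weakly-$*$ to $\dot{x}^*$.

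The crucial step is to show that the limit curve $x^*_{[t_0,t]}$ is itself admissible, i.e.\ that there is a measurable $u^*$ with values in $\mathcal{U}$ such that $\dot{x}^*_s=f(x^*_s,u^*_s)$ for a.e.\ $s$. This is exactly where convexity of $F(\cdot)$ is used: since $\dot{x}^k_s\in F(x^k_s)$, $x^k_s\to x^*_s$, and the set-valued map $x\mapsto F(x)$ is upper semicontinuous with compact convex values (inherited from continuity of $f$ and compactness of $\mathcal{U}$), any limit of convex combinations of the $\dot{x}^k$ must land in $F(x^*_s)$ for a.e.\ $s$. Concretely, apply Mazur's lemma to pass from weak-$*$ to strong $L^1$ (hence, along a further subsequence, pointwise a.e.) convergence; observe that each convex combination of points of $F(x^k_s)$ lies, for $k$ large, in an $\varepsilon$-neighbourhood of the closed convex set $F(x^*_s)$; and let $\varepsilon\downarrow 0$ to obtain $\dot{x}^*_s\in F(x^*_s)$ a.e. A measurable selection argument (Filippov's implicit-function lemma) then produces the required control $u^*$, so $\xi^*=x^*_t\in R_t(x_0)$. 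Hence $R_t(x_0)$ is closed, and being also bounded, it is compact.

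I expect the main obstacle to be precisely this last step: reconciling the weak limit $\dot{x}^*$ of the velocities with the moving set-valued constraints $F(x^k_s)$. Convexity is exactly what prevents the weak limit from escaping the velocity sets (without it, only the relaxed/convexified reachable set would be closed), and carrying out the Mazur-plus-upper-semicontinuity passage together with the concluding measurable selection rigorously is the technical heart of the argument; the boundedness estimate and the Arzel\`a--Ascoli compactness are comparatively routine.
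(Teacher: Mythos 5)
The paper does not actually prove this statement: it is Filippov's theorem, imported verbatim from the cited reference \cite{optimal_control_daniel_liberzon} and used as a black box to justify existence of an optimal control. Your proposal is, in outline, the standard proof found in that reference (and in Cesari, Aubin--Cellina): compactness equals boundedness plus closedness; closedness via Arzel\`a--Ascoli on an equi-Lipschitz family of trajectories, weak-$*$ convergence of the velocities, Mazur's lemma combined with upper semicontinuity of the compact convex velocity sets to force $\dot{x}^*_s\in F(x^*_s)$ a.e., and Filippov's measurable selection lemma to recover an admissible control. That structure is correct, and you rightly identify convexity as the load-bearing hypothesis (without it only the relaxed reachable set is closed).

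The one step you should not wave at is the uniform a priori bound. The claim that ``local boundedness of $f$ plus a continuation argument gives $\|\dot{x}_s\|\le m(1+\|x_s\|)$ with $m$ independent of the control'' does not follow from the stated hypotheses: continuity of $f$ on $\mathbb{R}^n\times\mathcal{U}$ with $\mathcal{U}$ compact gives boundedness of $f$ on compact sets of $x$, not a global linear growth estimate, and the assumption that each individual solution exists on $[t_0,t_1]$ does not by itself produce a bound that is uniform over all admissible controls. In the textbook treatment this uniform boundedness of trajectories is either an additional standing assumption or is verified separately (e.g., under a global Lipschitz or linear growth condition on $f$). For the application in this paper the gap is harmless: since $U_t\succeq 0$, the term $-X_tU_tX_t$ is negative semidefinite, so every Riccati trajectory is sandwiched between $0$ and the solution of the Lyapunov equation obtained with $U_t\equiv 0$, giving the required control-independent bound. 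With that point either assumed or filled in, the rest of your argument is sound and coincides with the proof in the cited source.
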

Specifically, one can convert the original Langrange-type problem \eqref{problem:optimization_problem_control} into an equivalent Mayer-type problem by introducing an auxiliary state $x_t^{\text{aux}}$ satisfying $x_{t_0}^{\text{aux}}=0$ and $\dot{x}_t^{\text{aux}}=\text{Tr}(X_t+\alpha U_t X_t)$. 
In the Mayer form, the original problem of minimizing \eqref{problem:optimization_problem_control_a} becomes the problem of minimizing $x_{t_1}^{\text{aux}}$ over the reachable set at $t=t_1$. Since the premises of Theorem~\ref{theo:filippov} are satisfied by the obtained Mayer-type problem, we can conclude that the reachable set at $t=t_1$ is compact. Therefore, Weierstrass' extreme value theorem guarantees the existence of an optimal solution. 
}
\subsection{Pontryagin Minimum Principle}
In this section, we briefly discuss the minimum principle for fixed-endtime free-endpoint optimal control problem followed by its application to our problem \eqref{problem:optimization_problem_control}.
Suppose that an admissible control input is a measurable function $u_t:[t_0,t_1]\rightarrow \mathcal{U}$ where $\mathcal{U}\subset \mathbb{R}^m$ is a compact set. We assume that $f(x_t,u_t)$, $\frac{\partial f(x_t,u_t)}{\partial x_t}$, $L(x_t,u_t)$ and $\frac{\partial L(x_t,u_t)}{\partial x_t}$ are continuous on $\mathbb{R}^n\times \mathcal{U}\times (t_0,t_1)$. Consider the Lagrange-type problem 
\begin{subequations}
\begin{align}
    &\underset{u_t}{\mathrm{min}}\quad \int_{t_0}^{t_1}L(x_\tau,u_\tau)d\tau\\
    &\mathrm{s.t.}\quad \dot{x}_t=f(x_t,u_t),\;\;x_t\in\mathbb{R}^n,\;u_t\in\mathcal{U}\label{eqn:dynamics}\\
    &\quad\;\;\;\;\;x_{t_0}=x_0.
    \label{problem:optimal_control_problem_initial_condition}
\end{align}
\label{problem:optimal_control_problem}
\end{subequations}
\begin{theorem}
\label{theo:pmp}
\normalfont \cite[Theorem 5.10]{athans2013optimal_book}
Suppose there exists an optimal solution to \eqref{problem:optimal_control_problem}. Let $u^\star_t:[t_0,t_1]\rightarrow \mathcal{U}$ be the optimal control input and $x^\star_t:[t_0,t_1]\rightarrow \mathbb{R}^n $ be the corresponding state trajectory. Then, it is necessary that there exists a function $p^\star_t:[t_0,t_1]\rightarrow \mathbb{R}^n$ such that the following conditions hold for the Hamiltonian $H$ defined as
\begin{align}
    H(x_t,p_t,u_t)=L(x_t,u_t)+p^\top_t f(x_t,u_t):
\end{align}
% where $<p,f(x,u)>=p^\top f(x,u)$.

(i) $x^\star_t$ and $p^\star_t$ satisfy the following canonical equations:
\begin{align}
    &\dot{x}^\star_t=\frac{\partial H(x^\star_t,p^\star_t,u^\star_t)}{\partial p_t},\quad\dot{p}^\star_t=-\frac{\partial H(x^\star_t,p^\star_t,u^\star_t)}{\partial x_t}\nonumber
\end{align}
with boundary conditions $x_{t_0}=x_0$ and $p_{t_1}=0$.

(ii) $\underset{u_t\in \mathcal{U}}{\mathrm{min}}\;\; H(x^\star_t,p^\star_t,u_t)=H(x^\star_t,p^\star_t,u^\star_t)$ for all $t\in[t_0,t_1]$.
\end{theorem}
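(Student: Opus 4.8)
The statement is the classical Pontryagin minimum principle for a Lagrange-type problem with fixed terminal time and free terminal state, so the plan is to reproduce the standard needle-variation derivation (as in \cite{athans2013optimal_book}). First I would recast \eqref{problem:optimal_control_problem} in Mayer form: introduce the augmented state $\tilde x_t=(x^0_t,x_t)\in\mathbb{R}^{n+1}$ with $\dot x^0_t=L(x_t,u_t)$, $x^0_{t_0}=0$, together with $\dot x_t=f(x_t,u_t)$, $x_{t_0}=x_0$, so that minimizing $\int_{t_0}^{t_1}L\,d\tau$ becomes minimizing the scalar $x^0_{t_1}$ over the reachable set of the augmented dynamics $\dot{\tilde x}_t=\tilde f(\tilde x_t,u_t)$ at $t=t_1$. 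The assumed continuity of $f,L$ and of $\partial f/\partial x,\partial L/\partial x$ carries over to $\tilde f$, so the linearization of $\tilde f$ along any admissible trajectory is well defined and solutions exist on all of $[t_0,t_1]$.

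Second, fix a Lebesgue point $\tau\in(t_0,t_1)$ of $u^\star$, an admissible value $v\in\mathcal{U}$ and a small $\varepsilon>0$, and consider the needle (spike) variation $u^\varepsilon_t$ equal to $v$ on $[\tau-\varepsilon,\tau]$ and to $u^\star_t$ elsewhere. Using Gronwall's inequality along $\tilde x^\star$ I would show that the resulting terminal state obeys $\tilde x^\varepsilon_{t_1}=\tilde x^\star_{t_1}+\varepsilon\,\Phi(t_1,\tau)\bigl(\tilde f(\tilde x^\star_\tau,v)-\tilde f(\tilde x^\star_\tau,u^\star_\tau)\bigr)+o(\varepsilon)$, where $\Phi(\cdot,\cdot)$ is the transition matrix of the variational equation $\dot\xi_t=(\partial\tilde f/\partial x)\,\xi_t$ linearized along $\tilde x^\star$. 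Collecting these first-order directions over all $\tau$ and $v$ and showing — by concatenating finitely many disjoint needles — that their conical hull is a convex cone $\mathcal{K}\subset\mathbb{R}^{n+1}$ of attainable variations at $\tilde x^\star_{t_1}$ is the technical heart of the argument.

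Third, optimality of $u^\star$ means the direction $-e_0$ (with $e_0\in\mathbb{R}^{n+1}$ the unit vector along the cost coordinate $x^0$) cannot lie in the interior of $\mathcal{K}$; a separating-hyperplane argument then yields a nonzero covector $\tilde p_{t_1}=(p^0,p^{(x)}_{t_1})$ with $\langle\tilde p_{t_1},\kappa\rangle\ge0$ on $\mathcal{K}$ and $p^0\ge0$. Propagating it backward via $\dot{\tilde p}_t=-(\partial\tilde f/\partial x)^\top\tilde p_t$ along $\tilde x^\star$ makes $t\mapsto\langle\tilde p_t,\xi_t\rangle$ constant along solutions of the variational equation, so evaluating the separation inequality on each needle direction gives $H(x^\star_t,p^\star_t,v)\ge H(x^\star_t,p^\star_t,u^\star_t)$ for a.e.\ $t\in[t_0,t_1]$ and all $v\in\mathcal{U}$, which is condition (ii). The $x^0$-component of $\tilde p_t$ is constant and equals $p^0$; since the terminal state $x_{t_1}$ is unconstrained, the transversality condition of the Mayer problem forces $p^{(x)}_{t_1}=0$ and excludes $p^0=0$, so after the standard normalization $p^0=1$ we recover the Hamiltonian $H=L+p^\top f$, the canonical equations $\dot x^\star_t=\partial H/\partial p$, $\dot p^\star_t=-\partial H/\partial x$, and the boundary data $x_{t_0}=x_0$, $p_{t_1}=0$ asserted in part (i).

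The step I expect to be the main obstacle is establishing that $\mathcal{K}$ is a convex cone: one must verify that simultaneous needle variations at distinct Lebesgue points superpose to first order uniformly in $\varepsilon$, which requires careful control of the remainder terms. Since this is a standard textbook fact, in the paper we simply invoke [Theorem 5.10, \cite{athans2013optimal_book}] rather than reproduce the full derivation.
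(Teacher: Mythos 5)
The paper gives no proof of this theorem at all --- it is stated as a citation to [Theorem 5.10, \cite{athans2013optimal_book}] and used as a black box --- so there is nothing to compare your argument against except the textbook it points to. Your sketch is the standard needle-variation proof of the minimum principle and is correct in outline: Mayer augmentation, first-order expansion of a spike perturbation at a Lebesgue point via the variational equation, a cone of attainable variations, separation, and backward propagation of the covector. One remark worth making: for this particular setting (fixed end time, completely free terminal state) the separating-hyperplane and convex-cone machinery --- the step you identify as the main obstacle --- can be bypassed entirely. Since $x_{t_1}$ is unconstrained, you may simply \emph{define} the augmented adjoint by $\tilde p_{t_1}=e_0$ (i.e., $p^0=1$, $p_{t_1}=0$) and integrate $\dot{\tilde p}_t=-(\partial\tilde f/\partial x)^\top\tilde p_t$ backward; the constancy of $\langle\tilde p_t,\xi_t\rangle$ along the variational flow then shows that the first-order change in the cost $x^0_{t_1}$ produced by a single needle at $\tau$ is $\varepsilon\,\bigl(H(x^\star_\tau,p^\star_\tau,v)-H(x^\star_\tau,p^\star_\tau,u^\star_\tau)\bigr)+o(\varepsilon)$, and optimality forces this to be nonnegative, giving (ii) directly. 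This route needs only one needle at a time (no superposition argument), and normality ($p^0=1$) and the transversality condition $p_{t_1}=0$ are built in rather than deduced. Either way, your concluding point stands: this is a textbook result and the paper is right to invoke it rather than reprove it.
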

% \begin{proof}
% See Theorem 5.10 of \cite{athans2013optimal_book}.
% \end{proof}

For our problem \eqref{problem:optimization_problem_control}, the Hamiltonian is defined as 
\begin{align}
    H(X_t,P_t,U_t)=&\mathrm{Tr}(X_t+\alpha U_t X_t)+\nonumber\\
    &\left<P_t, AX_t+X_t A^\top-X_tU_tX_t+BB^\top \right>\nonumber\\
    =&\mathrm{Tr}(P_t(AX_t+X_t A^\top+BB^\top))+\mathrm{Tr}(X_t)\nonumber\\
    &+\mathrm{Tr}((\alpha X_t-X_tP_tX_t)U_t).\nonumber
\end{align}
Thus, the necessary optimality condition provided by Theorem~\ref{theo:pmp} is given by the canonical equations
\begin{subequations}
\begin{align}
    &\dot{X}_t=AX_t+X_tA^\top-X_tU_tX_t+BB^\top \\
    &\dot{P}_t=P_tX_tU_t+U_tX_tP_t\!-\!P_tA\!-\!A^\top P_t\!-\!I\!- \!\alpha U_t
\end{align}
\label{eqn:canonical_vector_case}
\end{subequations}
with boundary conditions $X_{t_0}=X_0$ and $P_{t_1}=0$, and
\begin{align}
    U_t^\star=\underset{U_t\in\mathcal{U}}{\mathrm{argmin}}\;\mathrm{Tr}[(\alpha X_t-X_tP_tX_t)U_t]
    \label{eqn:u_star_matrix}
\end{align}
where $\mathcal{U}=\{U_t\in\mathbb{S}_+^n\;\forall t\in[t_0,t_1]: \mathrm{Tr}(U_t)\leq 1\}$.

\section{Optimal solution in scalar case\label{sec:main_result}}

To solve the optimality condition \eqref{eqn:canonical_vector_case}
and \eqref{eqn:u_star_matrix} explicitly, in this section, we restrict our attention to scalar cases (i.e., $n=1$). 
In what follows, we assume $A=a<0$ and $B=1$. The assumption that $a<0$ is natural as it is required for the source process \eqref{eq:source} to be stable.
The canonical equations \eqref{eqn:canonical_vector_case} are simplified as
\begin{subequations}
\label{eqn:region_2_governing_equations_xp}
\begin{align}
    &\dot{x}_t=2ax_t-x_t^2u_t+1\label{eqn:region_2_governing_equations_x}\\
    % \end{align}
    % \begin{align}
    & \dot{p}_t=2x_tp_tu_t-2ap_t-1-\alpha u_t\label{eqn:region_2_governing_equations_p}
\end{align}
\label{eqn:canonical}
\end{subequations}
with $x_{t_0}=x_0$ and $p_{t_1}=0$. Due to the original meaning of $x_0$ as a covariance of the initial value of the underlying process $\mathbf{x}_0$, we assume that $x_0\geq 0$. Since \eqref{eqn:region_2_governing_equations_x} is a monotone system \cite{angeli2003monotone}, $x_t\geq 0$ for $t\in[t_0,t_1]$. The optimal control $u_t$ is given as follows:
\begin{align}
    u_t^\star&=\underset{0\leq u_t\leq 1}{\mathrm{argmin}}\;\; x_t(\alpha-x_tp_t)u_t\nonumber\\
    &= \left\{ \begin{array}{l}
0\quad\text{if}\quad p_tx_t<\alpha
\\
1\quad\text{if}\quad p_tx_t>\alpha
\\
u^\star\in[0,1]\quad\text{if}\quad p_tx_t=\alpha
\end{array}
\right.
\label{eqn:u_star_argmin}
\end{align}
The main result of this paper is stated as follows:
\begin{theorem}
\normalfont For any $x_0>0$, $\alpha>0$, $a<0$ and specified time interval $[t_0,t_1]$,{ an optimal control exists and satisfies \eqref{eqn:region_2_governing_equations_xp} and \eqref{eqn:u_star_argmin}.
Furthermore, the optimal control is bang-bang type (i.e., it takes either $u_t=0$ or $u_t=1$), except that in a particular case, it can take an intermediate value ($0\leq u^\star\leq 1$ specified by \eqref{eq:u_star}). In all cases, the optimal control is a piecewise constant function with at most two discontinuities.}
% {In addition, the optimal sensor gain $c_t^\star=\sqrt{u_t^\star}$ for Problem \ref{problem:main_problem_statement} can be computed using Algorithms \ref{alg:case_a}, \ref{alg:case_b}
% and \ref{alg:case_c} depending upon case A, B or C respectively}. 
% \textcolor{red}{Trying to close the loop with Problem 1.}
\label{theorem:main_theorem}
\end{theorem}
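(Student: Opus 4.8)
The plan is to combine the existence statement already obtained in Section~\ref{sec:preliminaries} with Pontryagin's principle and a phase-plane analysis of the scalar canonical system \eqref{eqn:region_2_governing_equations_xp}. Existence of an optimal control follows from the Filippov/Weierstrass argument of Section~\ref{sec:preliminaries}. For the necessary conditions, I would note that in the scalar case $f(x,u)=2ax-x^2u+1$ and $L(x,u)=x+\alpha xu$ are polynomials (hence $C^1$ in $(x,u)$) and $\mathcal U=[0,1]$ is compact, so Theorem~\ref{theo:pmp} applies and yields \eqref{eqn:region_2_governing_equations_xp} together with the pointwise minimization \eqref{eqn:u_star_argmin}. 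Since $x_0>0$ and $\dot x_t=1>0$ whenever $x_t=0$, the state satisfies $x_t>0$ on all of $[t_0,t_1]$, so $\mathrm{sign}\!\big(x_t(\alpha-x_tp_t)\big)=\mathrm{sign}(\alpha-x_tp_t)$; hence $u^\star_t$ is bang-bang off the switching curve $\Gamma=\{(x,p):x>0,\ xp=\alpha\}$, and everything reduces to understanding how an extremal interacts with $\Gamma$.

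The key device is to track the product $w_t\triangleq x_tp_t$. A direct computation from \eqref{eqn:region_2_governing_equations_x}--\eqref{eqn:region_2_governing_equations_p} gives
\[
\dot w_t=p_t-x_t+u_tx_t(w_t-\alpha),
\]
so on $\Gamma$ one has $\dot w_t=p_t-x_t=(\alpha-x_t^2)/x_t$, whose sign equals that of $\sqrt{\alpha}-x_t$. I would use this to conclude that an extremal can cross $\Gamma$ from the region $\{w<\alpha\}$ ($u^\star=0$) into $\{w>\alpha\}$ ($u^\star=1$) only at a point with $x_t<\sqrt\alpha$, and in the opposite direction only at a point with $x_t>\sqrt\alpha$. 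The same ODE handles singular arcs: if $w_t\equiv\alpha$ on a nondegenerate interval, then $\dot w_t\equiv 0$ forces $p_t=x_t$ there, and together with $x_tp_t=\alpha$ this gives $x_t=p_t=\sqrt\alpha$ throughout; substituting into \eqref{eqn:region_2_governing_equations_x} (or \eqref{eqn:region_2_governing_equations_p}) shows the only admissible singular control is the constant $u^\star=(1+2a\sqrt\alpha)/\alpha$, which lies in $[0,1]$ exactly when $x^+\le\sqrt\alpha\le\xi$, where $x^+\triangleq a+\sqrt{a^2+1}$ is the positive root of $x^2-2ax-1=0$ and $\xi\triangleq-1/(2a)$. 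This is the exceptional intermediate-value case, and the singular arc, when present, is the single equilibrium $(\sqrt\alpha,\sqrt\alpha)$ of the canonical system.

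To bound the number of switches I would exploit the structure of the two constituent flows. In the region $u=0$ the canonical equations decouple, $\dot x_t=2ax_t+1$ and $\dot p_t=-2ap_t-1$, so $x_t$ is monotone with limit $\xi$; in the region $u=1$, $\dot x_t=-(x_t-x^+)(x_t-x^-)$, so $x_t$ is driven monotonically toward $x^+$, and $x^+<\xi$ for every $a<0$ (this inequality reduces to $4a^4+4a^2<4a^4+4a^2+1$). Combining this with the crossing-direction restriction above and the terminal condition $p_{t_1}=0$ — which gives $w_{t_1}=0<\alpha$, so every extremal ends in the region $u=0$ — a short case analysis on the position of $\sqrt\alpha$ relative to $x^+$ and $\xi$ yields: (i) once an extremal crosses into $\{w<\alpha\}$, its $x$-coordinate is trapped on the side of $\sqrt\alpha$ that forbids any further crossing; and (ii) a crossing from $\{w<\alpha\}$ into $\{w>\alpha\}$ is possible only when $\sqrt\alpha<x^+$, and is then followed by at most one more crossing. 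Hence the forward-time control pattern is one of $\{0\}$, $\{1,0\}$, $\{0,1,0\}$, possibly with a dwell on the singular equilibrium inserted in the exceptional case — at most two discontinuities — and $u^\star_t$ is piecewise constant because each arc carries a fixed admissible value of $u$.

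I expect the main obstacle to be the degenerate junction at $(\sqrt\alpha,\sqrt\alpha)$ in the exceptional case: there $\dot w_t=0$, so an extremal meets $\Gamma$ non-transversally, and one must rule out a $0\!\to\!1\!\to\!0$ chattering near this point and show that the extremal either passes through it instantaneously (contributing a single $1\!\to\!0$ switch) or rests on it and then departs with $u=0$ (two discontinuities), but never re-enters $u=1$ afterwards. This is settled by computing $\ddot w_t$ along the $u=0$ and $u=1$ flows at $(\sqrt\alpha,\sqrt\alpha)$ — each flow is one-sided tangent to $\Gamma$ there, curving into its own region — which pins down the admissible junctions; the remaining work is the routine enumeration of the phase portraits for the three cases $\sqrt\alpha<x^+$, $x^+\le\sqrt\alpha\le\xi$, and $\sqrt\alpha>\xi$.
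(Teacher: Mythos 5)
Your proposal is correct and rests on the same machinery as the paper: existence via the Filippov/Weierstrass argument, the necessary conditions \eqref{eqn:region_2_governing_equations_xp}--\eqref{eqn:u_star_argmin} from Theorem~\ref{theo:pmp}, and a phase-plane analysis organized around the switching function $xp-\alpha$. Your computation $\dot w_t=p_t-x_t+u_tx_t(w_t-\alpha)$, giving $\dot w_t|_{\{w=\alpha\}}=(\alpha-x_t^2)/x_t$ independently of $u$, is exactly the paper's Lie-derivative computation $L_{f_1}V|_{\mathcal S}=L_{f_2}V|_{\mathcal S}=L_{f_3}V|_{\mathcal S}=\alpha/x-x$ with $V=xp$, and your three parameter regimes $\sqrt\alpha<x^+$, $x^+\le\sqrt\alpha\le\xi$, $\sqrt\alpha>\xi$ (with $x^+=a+\sqrt{a^2+1}$, $\xi=-1/2a$) are precisely the paper's Cases C, B, A, with the same singular control $u^\star=2a/\sqrt\alpha+1/\alpha$ at $(\sqrt\alpha,\sqrt\alpha)$. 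Where you diverge is in how the bound of two discontinuities is closed: you argue qualitatively, using monotone convergence of $x_t$ to $\xi$ under $u=0$ and to $x^+<\xi$ under $u=1$, the one-way crossing directions on $\mathcal S$, and the terminal constraint $w_{t_1}=0<\alpha$, to rule out all patterns beyond $\{0\}$, $\{1,0\}$, $\{0,1,0\}$ and their singular-dwell variants; the paper instead integrates each regime in closed form (equations \eqref{eqn:region1_xp} and \eqref{eqn:region3_xp}) and exhaustively enumerates the subcases, which simultaneously proves the switch count and produces the explicit switching-time equations and Algorithms 1--3. Your route is leaner for establishing the qualitative theorem; the paper's buys the constructive computation of $t'$ and $t''$. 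One point to tighten when writing it up: your claim (i) that after a downward crossing the $x$-coordinate is ``trapped'' above $\sqrt\alpha$ holds only when $\xi\ge\sqrt\alpha$ (Cases B and C); in Case A ($\xi<\sqrt\alpha$) the $x$-coordinate can fall below $\sqrt\alpha$, and a renewed upward crossing must instead be excluded by the terminal condition (an upward crossing would confine $x_t$ below $x^+<\xi<\sqrt\alpha$ under $u=1$, making the required return to $\{w<\alpha\}$ impossible), an argument you already have available but should invoke explicitly there.
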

\begin{proof}
{
The existence of an optimal solution and the necessary optimality conditions were discussed in Section~\ref{sec:preliminaries}.
The rest of the statement follows from the analysis in Sections~\ref{subsec:phase_potrait} and Section~\ref{sec:analytical_solution} below. Explicit expressions for the optimal control are also given in Section~\ref{sec:analytical_solution}.}
\end{proof}

\subsection{Phase Portrait analysis\label{subsec:phase_potrait}}

\begin{figure}[t]
\centering
\includegraphics[width=7.5cm]{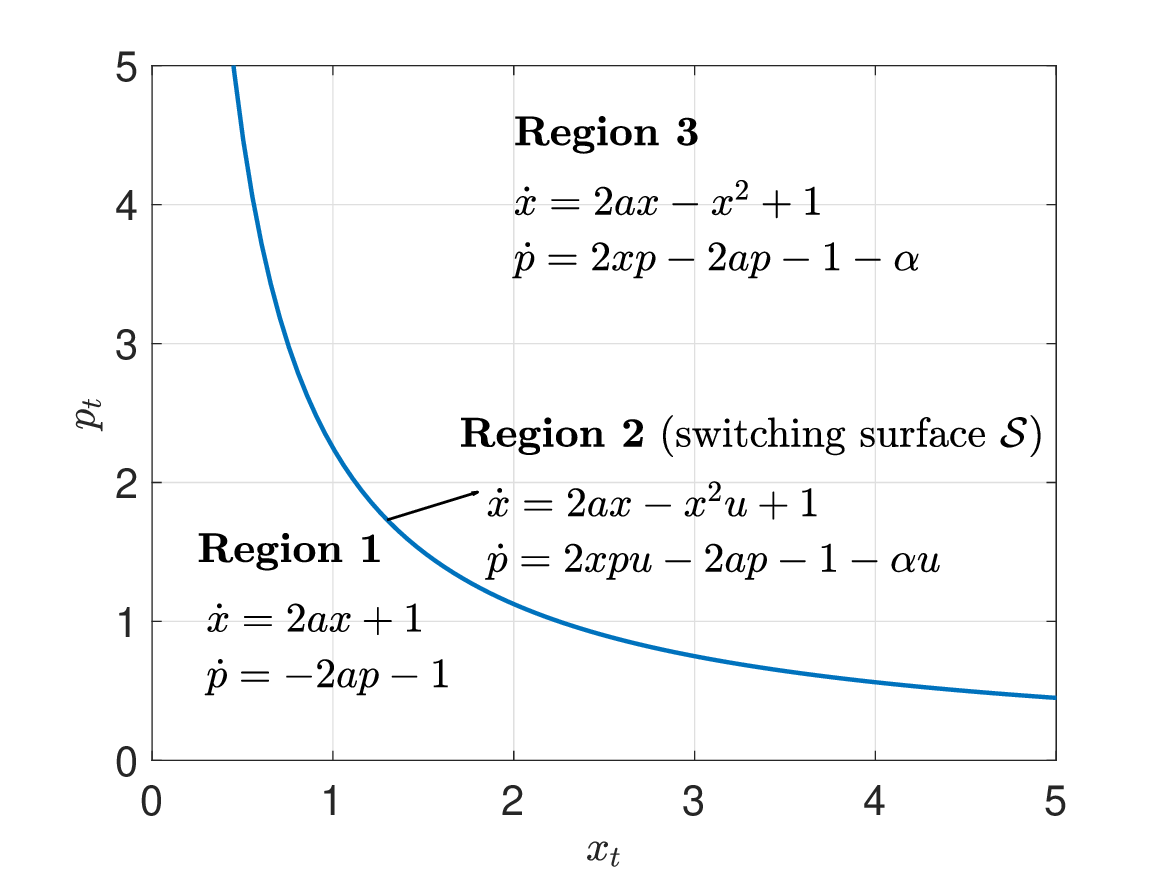}
\caption{Illustration of different regions in the phase space.}
\label{fig:regions}
\end{figure}
To analyze the canonical equations \eqref{eqn:canonical}, consider the vector field defined by the RHS of \eqref{eqn:canonical} and \eqref{eqn:u_star_argmin}. Because of the classification in \eqref{eqn:u_star_argmin}, the vector field is discontinuous at the switching surface $\mathcal{S}$ characterized by $p_tx_t=\alpha$. We divide the domain $\{(x,p)|x\geq0\}$ into Regions 1, 2 and 3 as shown in Fig.~\ref{fig:regions}.\footnote{Fig.~\ref{fig:regions} only shows the region $\{(x,p)|x\geq0, p\geq 0\}$ as it turns out that the region $\{(x,p)|x\geq0, p< 0\}$ plays no role in the following derivation.}
{\color{black} Denote the vector field in Region 1, 2 and 3 (see Fig. \ref{fig:cases}) as $f_1, f_2$ and $f_3$, respectively. From \eqref{eqn:canonical} and \eqref{eqn:u_star_argmin}, we have
\begin{align}
    f_1 :& \begin{cases} \dot{x}_t=2ax_t+1 \\
    \dot{p}_t=-2ap_t-1 \end{cases} \label{eqn:region1_xp_f1} \\
    f_2 :& \begin{cases}\dot{x}_t=2ax_t-x_t^2u_t+1 \\
    \dot{p}_t=2x_tp_tu_t-2ap_t-1-\alpha u_t \end{cases}  \label{eqn:region2_xp_f2} \\
    f_3 :& \begin{cases} \dot{x}_t=2ax_t-x_t^2+1\\
    \dot{p}_t=2x_tp_t-2ap_t-1-\alpha. \end{cases} \label{eqn:region3_xp_f3}
\end{align}
\subsubsection{Local solutions in Regions 1 and 3} 
Since \eqref{eqn:region1_xp_f1} is a linear differential equation, its general solution is given by
\begin{align}
    & x_t=\frac{k_1e^{2at}-1}{2a},\quad p_t=\frac{k_2e^{-2at}-1}{2a}
    \label{eqn:region1_xp}
\end{align}
where $k_1$ and $k_2$ are constants. Since \eqref{eqn:region3_xp_f3} belongs to a class of scalar Riccati differential equations, an analytical solution exists and is given as follows:
\begin{subequations}
\begin{align}
      &x_t=a+c-\frac{2c}{k_3e^{2 ct}+1},\label{eqn:region3_x}\\
      &    p_{t}= k_4\frac{\left(k_3e^{2 ct}+1\right)^2}{e^{2 ct}}+ \frac{(1+\alpha)\left(k_3e^{2 ct}+1\right)}{2ck_3e^{2 ct}}.\label{eqn:region3_p}
\end{align}
\label{eqn:region3_xp}
\end{subequations}
where $k_3$, $k_4$ are constants and $c=\sqrt{a^2+1}$.
\subsubsection{Stationary points} 
The nature of the phase portrait (namely the location of the stationary points) changes depending on the value of $\alpha$. Observing that $0<(a+\sqrt{a^2+1})^2<1/4a^2$ for all $a<0$, the following three cases can occur:
\begin{itemize}
    \item Case A: $1/4a^2<\alpha$. In this case, the phase portrait has a unique stationary point in Region 1, located at 
    \begin{align}
    E=(x_e,p_e)=(-1/2a,-1/2a). \label{eqn:equilibrium_region_1}
    \end{align}
    It is not possible for $f_2$ to have a stationary point in Region 2 no matter what value of $u_t\in[0,1]$ is chosen. A stationary point does not exist in Region 3 either.
    \item Case B: $(a+\sqrt{a^2+1})^2\leq\alpha\leq 1/4a^2$. In this case, no stationary point exists in Region 1 or in Region 3. However, the point 
    \begin{align}
    E=(x_e,p_e)=(\sqrt{\alpha},\sqrt{\alpha})\label{eqn:equilibrium_region_2}
    \end{align}
    in Region 2 is a stationary point if $u_t$ is set to be
    \begin{equation}
    \label{eq:u_star}
    u^\star=2a/\sqrt{\alpha}+1/\alpha.
    \end{equation}
    Under the Case B assumption that $(a+\sqrt{a^2+1})^2\leq\alpha\leq 1/4a^2$, the value of $u^\star$ from \eqref{eq:u_star} satisfies $0\leq u^\star \leq 1$. In Region 2, no other point can be a stationary point. 
    \item Case C: $\alpha<(a+\sqrt{a^2+1})^2$. In this case, the phase portrait has a unique stationary point 
    \begin{align}
    (x_e,p_e)=\left(a+\sqrt{a^2+1},\frac{1+\alpha}{2\sqrt{a^2+1}}\right)\label{eqn:equilibrium_region_3}
    \end{align}
    in Region 3. No stationary point can exists in Regions 1 and 2.
\end{itemize}
The vector field in each case is depicted in Fig.~\ref{fig:cases}.

\begin{figure*}[ht]
 \captionsetup[subfigure]{justification=centering}
 \centering
%   \begin{subfigure}{0.3\textwidth}
% {\includegraphics[scale=0.3]{regions.eps}}
% \caption{}
% \label{fig:casea}
%  \end{subfigure}
 \begin{subfigure}{0.33\textwidth}
{\includegraphics[scale=0.32]{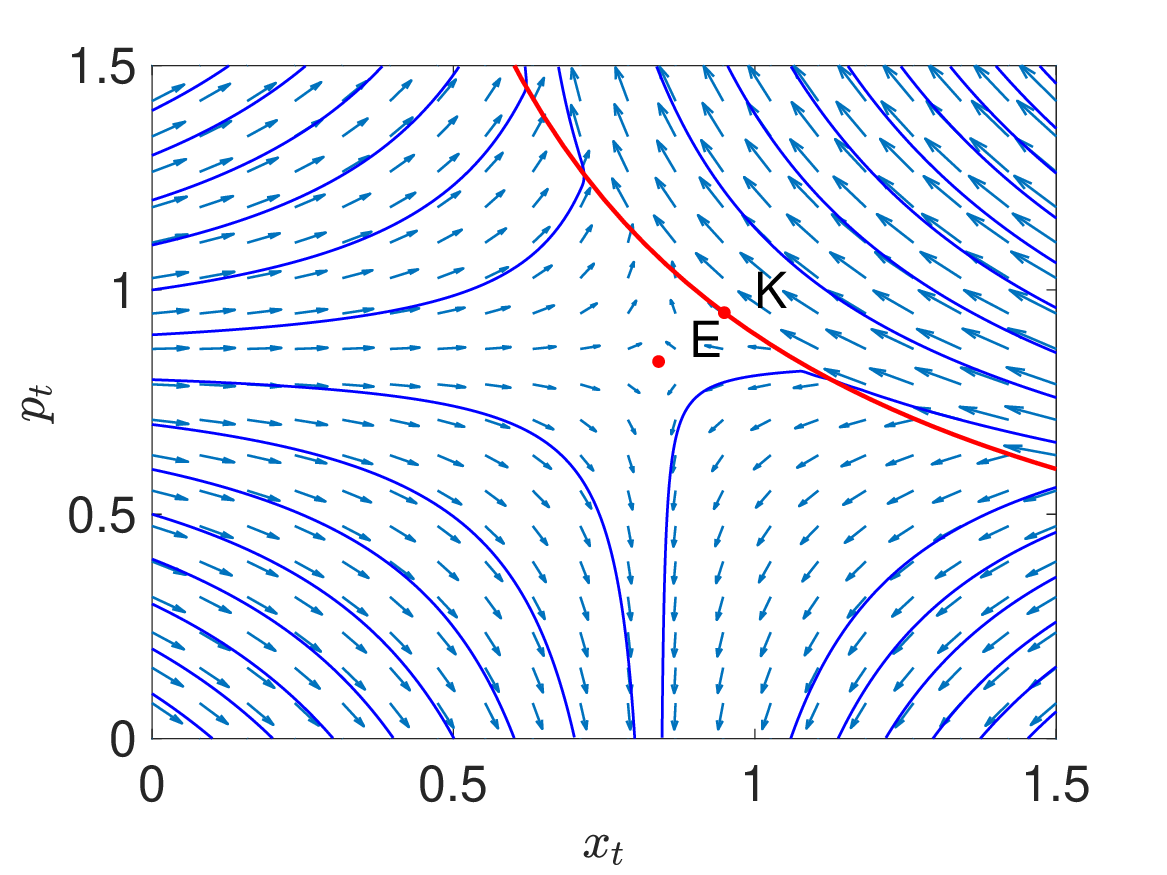}}
\caption{Case A}
\label{fig:casea}
 \end{subfigure}
\label{fig:approximation_error_x_position}
 \begin{subfigure}{0.33\textwidth}
{\includegraphics[scale=0.32]{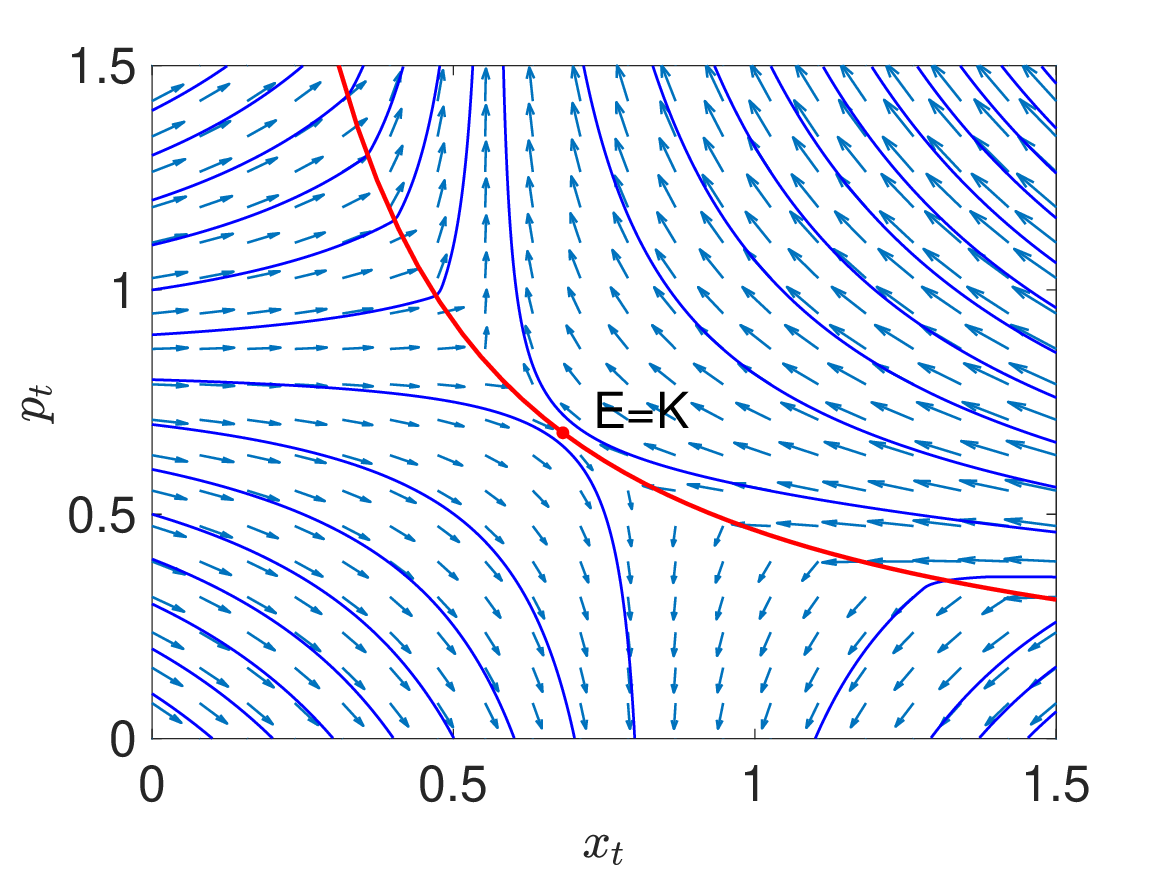}}
 \caption{Case B}
\label{fig:caseb}
 \end{subfigure}
\label{fig:approximation_error_v_velocity}
 \begin{subfigure}{0.32\textwidth}
{\includegraphics[scale=0.32]{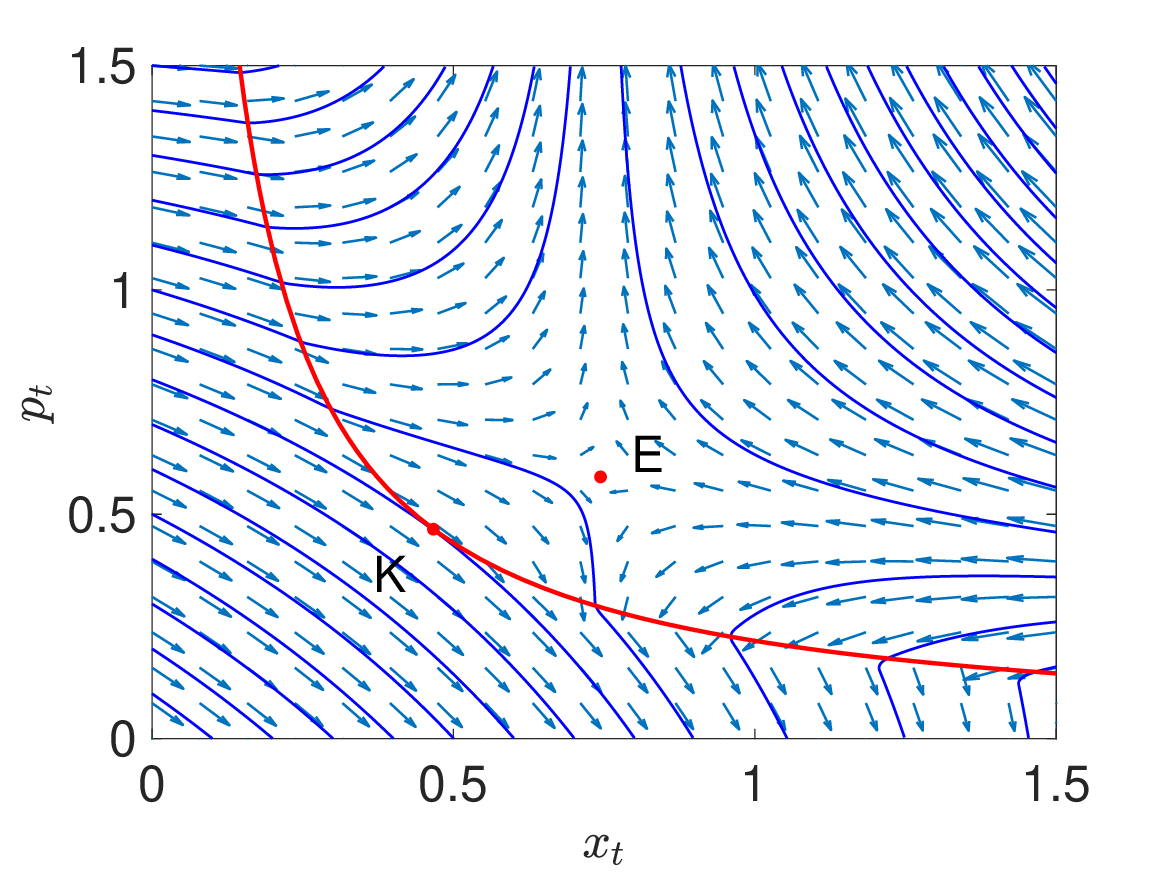}}
\caption{Case C}
\label{fig:casec}
\end{subfigure}
 \caption{Phase portraits for various cases.}
\label{fig:cases}
\end{figure*}

\subsubsection{Switching behavior}
To understand the switching behaviour, we analyze the directions of vector fields $f_1$, $f_2$ and $f_3$ with respect to $\mathcal{S}$ in the neighborhood of $\mathcal{S}$. Note that we can analyze the solution by checking the signs of Lie derivatives $L_{f_1}V$, $L_{f_2}V$ and $L_{f_3}V$ evaluated on surface $\mathcal{S}$. Note that the Lie derivatives along $f_1$, $f_2$ and $f_3$ are given by:
\begin{align}
    L_{f_1}V&=\frac{\partial V}{\partial x}\dot{x}+\frac{\partial V}{\partial p}\dot{p}=p\dot{x}+x\dot{p}\nonumber\\
    &=p(2ax+1)-x(2ap+1)=p-x\nonumber \\
    L_{f_2}V&=p\dot{x}+x\dot{p}=x^2pu+p-x-\alpha xu\nonumber \\      
L_{f_3}V&=p\dot{x}+x\dot{p}\nonumber\\
    &=p(2ax-x^2+1)+x(2xp-2ap-1-\alpha)\nonumber\\
        &=x^2p+p-x-\alpha x. \nonumber
\end{align}
Hence, when $px=\alpha$ (on $\mathcal{S}$), we have
\[
L_{f_1}V|_\mathcal{S}=L_{f_2}V|_\mathcal{S}=L_{f_3}V|_\mathcal{S}=\alpha/x-x.
\]
This would imply that $f_1$, $f_2$ and $f_3$ would point along the same direction everywhere on $\mathcal{S}$. More precisely, all these vector fields cross $\mathcal{S}$ downward when $x < \sqrt{\alpha}$, upward when $x < \sqrt{\alpha}$ and are tangential to $\mathcal{S}$ at point $K=(\sqrt{\alpha},\sqrt{\alpha})$. 
% \[
% K=(\sqrt{\alpha},\sqrt{\alpha}).
% \]
When $(a+\sqrt{a^2+1})^2\leq\alpha\leq 1/4a^2$ (Case B), the point $K$ becomes a stationary point and $u_t$ is defined as given in \eqref{eq:u_star}. This point is the same $E$ defined in \eqref{eqn:equilibrium_region_2}. 

An important consequence of the above analysis is that the phase portraits in Fig.~\ref{fig:cases} are devoid of the ``chattering" solutions in all three cases. Therefore, the solution concept of Caratheodory \cite{discontinuous_dynamical_system_cortes} is sufficient to describe the solutions crossing the switching surface $\mathcal{S}$.  
However, note that the uniqueness of the solution is lost in Case B. For example, consider the set of all trajectories $(x_t, p_t)$ that ``stay" on $E=K$ for an arbitrary duration as follows:
\begin{itemize}
    \item $(x_t, p_t)$ solves \eqref{eqn:region1_xp_f1} or \eqref{eqn:region3_xp_f3} for $t_0\leq t \leq t'$ with  $(x_{t'}, p_{t'})=(\sqrt{\alpha}, \sqrt{\alpha})$;
    \item $(x_t, p_t)=(\sqrt{\alpha}, \sqrt{\alpha})$ for $t'\leq t \leq t''$;
    \item $(x_t, p_t)$ solves \eqref{eqn:region1_xp_f1} or \eqref{eqn:region3_xp_f3} for $t'' \leq t_1$ with  $(\bar{x}_{t''}, \bar{p}_{t''})=(\sqrt{\alpha}, \sqrt{\alpha})$.
\end{itemize}
It is easy to verify that regardless of the choice of $t''(\geq t')$, all these trajectories are Caratheodory solutions to the canonical equations. 
}

\subsection{Analytical solution \label{sec:analytical_solution}}
{\color{black}
With the phase portraits shown in Fig.~\ref{fig:cases} in mind, we solve the boundary value problem \eqref{eqn:canonical} and \eqref{eqn:u_star_argmin} with the initial state constraint $x_{t_0}=x_0 (\geq 0)$ and the terminal costate constraint $p_{t_1}=0$. In the following, the solution to this boundary value problem is simply referred to as the optimal solution. We now consider Cases A, B and C sequentially.

\subsubsection{Case A}
In this scenario, the initial point (denoted by $(x_{t_0}^\star,p_{t_0}^\star)$) is either in the blue or yellow regions illustrated in Fig.~\ref{fig:case_a_color}. The boundaries of these two regions are characterized by the horizontal separatrix converging to point $E$ and the switching surface $\mathcal{S}$.
% the initial coordinate of $(x_{t_0}^\star,p_{t_0}^\star)$ of the optimal solution is either in the yellow or the blue regions illustrated in Fig.~\ref{fig:case_a_color}. The boundaries of these regions are defined by the switching surface $\mathcal{S}$ and the separatrices converging to the point $E$.
\subsubsection*{\underline{Subcase A-1} ($(x_{t_0}^\star,p_{t_0}^\star)$ is in the yellow region)}
The particular solution satisfying \eqref{eqn:region1_xp_f1} and the boundary constraints $x_{t_0}=x_0$ and $p_{t_1}=0$ are given as follows:
% Consider the particular solution to the vector field $f_1$ satisfying the boundary conditions $x_{t_0}=x_0$ and $p_{t_1}=0$:
\begin{subequations}
\label{eqn:xp_case1}
    \begin{align}
   & \bar{x}_t=\frac{(2ax_0+1)e^{2a(t-t_0)}-1}{2a}\label{eqn:x_case1}\\
   & \bar{p}_t=\frac{e^{-2a(t-t_1)}-1}{2a}.\label{eqn:p_case1}
\end{align}
\end{subequations}
% \begin{subequations}
% \label{eqn:xp_case1}
%     \begin{align}
%   & \bar{x}_t=\frac{(2ax_0+1)e^{2a(t-t_0)}-1}{2a},\;\; \bar{p}_t=\frac{e^{-2a(t-t_1)}-1}{2a}.\label{eqn:p_case1}
% \end{align}
% \end{subequations}
If $(\bar{x}_{t_0}, \bar{p}_{t_0})$ computed using \eqref{eqn:xp_case1} lies in Region 1 (i.e., $\bar{x}_{t_0}\bar{p}_{t_0} \leq \alpha$), then Subcase A-1 applies.  
There is no switching as the optimal solution lies in Region 1.
\begin{figure*}[ht]
 \captionsetup[subfigure]{justification=centering}
 \centering
%   \begin{subfigure}{0.3\textwidth}
% {\includegraphics[scale=0.3]{regions.eps}}
% \caption{}
% \label{fig:casea}
%  \end{subfigure}
 \begin{subfigure}{0.33\textwidth}
% {\includegraphics[scale=0.3]{case_A_color.png}}
{\includegraphics[width=6.2cm]{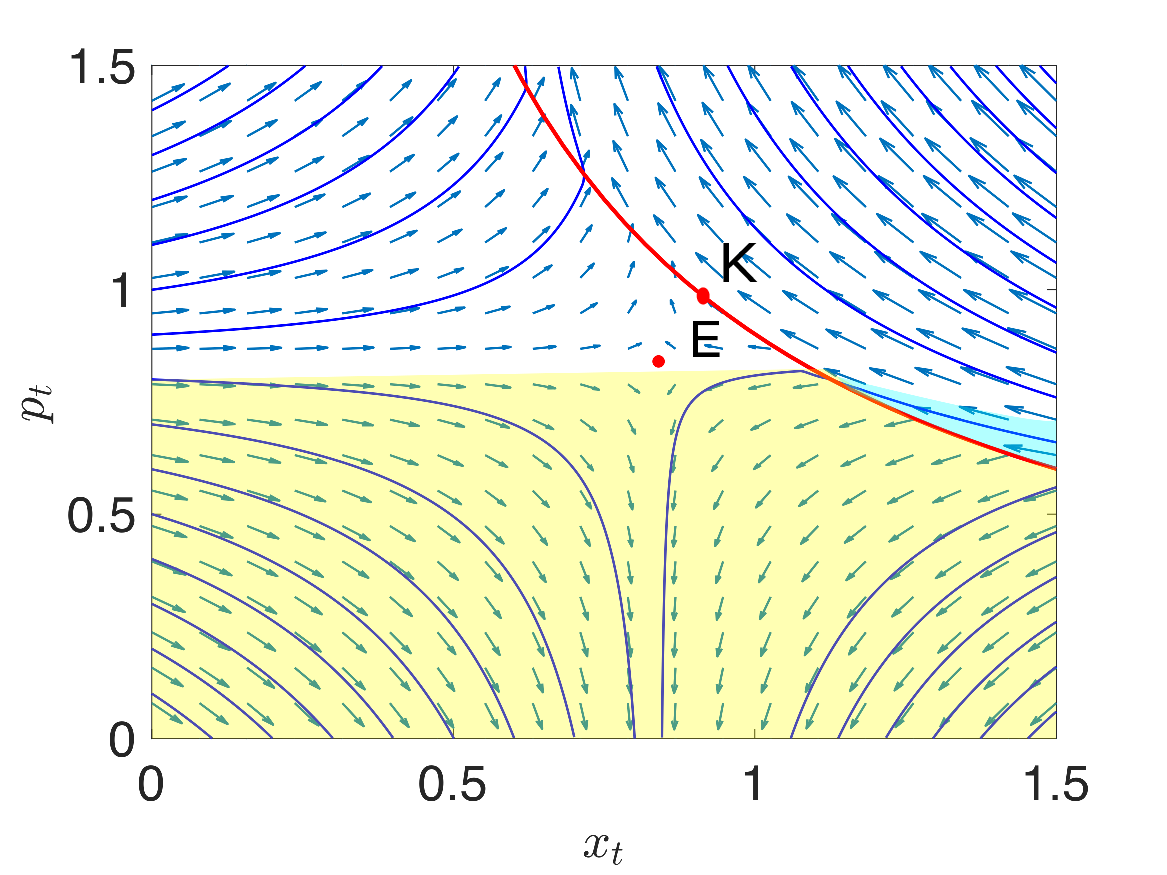}}
\caption{Case A.}
\label{fig:case_a_color}
 \end{subfigure}
 \begin{subfigure}{0.33\textwidth}
% {\includegraphics[scale=0.3]{case_B_color.png}}
 \includegraphics[width=6.2cm]{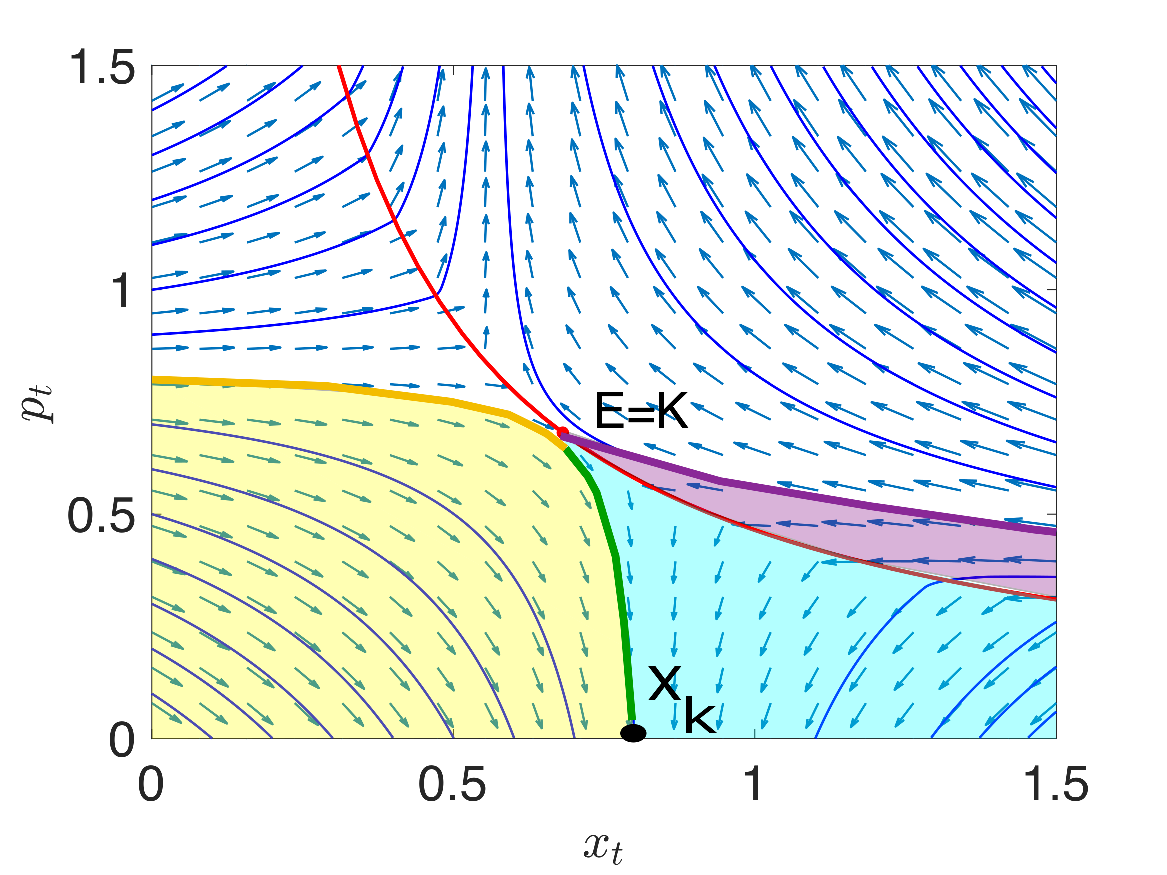}
 \caption{Case B.}
\label{fig:case_b_color}
 \end{subfigure}
 \begin{subfigure}{0.32\textwidth}
% {\includegraphics[scale=0.3]{case_C_color.png}}
 \includegraphics[width=6.2cm]{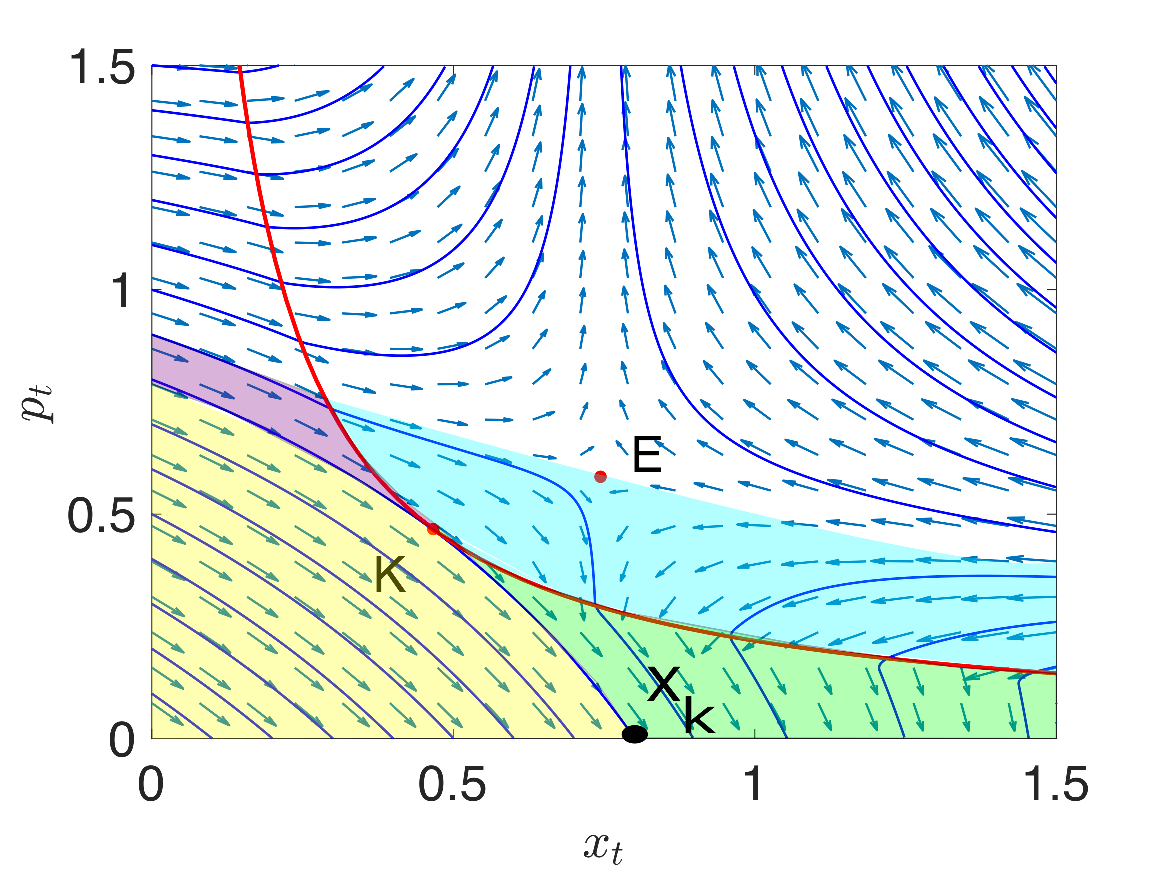}
\caption{Case C.}
\label{fig:case_c_color}
\end{subfigure}
 \caption{The different subcases for every particular case are depicted by colored regions .}
\label{fig:cases_color}
\end{figure*}
\subsubsection*{\underline{Subcase A-2} ($(x_{t_0}^\star,p_{t_0}^\star)$ is in the blue region)} \textcolor{blue}{}In this scenario, there exists a switching time $t'\in (t_0, t_1)$. To compute $t'$, observe that a particular solution \eqref{eqn:region3_x} satisfying the initial state constraint $x_{t_0}=x_0$ is given by
\begin{equation}
      \hat{x}_t=a+c-\frac{2c}{k_3e^{2 ct}+1} \label{eqn:x_case2}
\end{equation}
where $c=\sqrt{a^2+1}$ and
$k_3=\frac{c-a+x_0}{c+a-x_0}e^{-2ct_0}$.
On the contrary, the particular solution \eqref{eqn:region1_xp} satisfying $p_{t_1}=0$ is given by \eqref{eqn:p_case1}.
Therefore, at time $t'$, the state $\hat{x}_{t'}$ and costate $\bar{p}_{t'}$ must satisfy the governing switching surface equation i.e. $\hat{x}_{t'} \bar{p}_{t'}=\alpha$, or
\begin{align}
\left(a+c-\frac{2c}{k_3e^{2 ct'}+1}\right)\left(\frac{e^{-2a(t'-t_1)}-1}{2a}\right)=\alpha.
\label{eqn:caseA_switching_time}
\end{align}
Thereafter, we can compute $t'$ by solving  \eqref{eqn:caseA_switching_time}.

% The procedure to compute the optimal control $u^\star_{[t_0, t_1]}$  is summarized in Algorithm~\ref{alg:case_a}.
% \begin{algorithm}[H]
%  \caption{Optimal solution for case A}
%  % \small
% \hspace*{\algorithmicindent} \textbf{Input:} $a<0$, $\alpha$, $x_0$, $t_0$ and $t_1$  \\
%  \hspace*{\algorithmicindent} \textbf{Output:} $u_t^\star$
% \begin{algorithmic}[1]
% \State Compute $(\bar{x}_{t_0}, \bar{p}_{t_0})$ from \eqref{eqn:xp_case1};
% \If {$\bar{x}_{t_0}\bar{p}_{t_0}\leq \alpha$}\Comment{Subcase A-1}
% \State $u_t^\star\gets\{0_{[t_0,t_1]}\}$;
% \Else\Comment{Subcase A-2}
% \State $t'\gets\text{solve \eqref{eqn:caseA_switching_time} for $t'$}$;
% \State $u_t^\star\gets\{1_{[t_0,t']},0_{[t',t_1]}\}$;
% \EndIf
%  \end{algorithmic}
% \label{alg:case_a}
% \end{algorithm}

\subsubsection{Case B}
The initial point $(x_{t_0}^\star,p_{t_0}^\star)$ lies in either of the three colored regions shown in Fig.~\ref{fig:case_b_color}.
% The initial coordinate $(x_{t_0}^\star,p_{t_0}^\star)$ of the optimal solution is in the colored region in
% Fig.~\ref{fig:case_b_color}. 
% \textcolor{blue}{Let $x_K$}
Let $x_K$ be the $x$-coordinate at which a particular solution \eqref{eqn:region1_xp} to the vector field $f_1$ that crosses $K=E=(\sqrt{\alpha},\sqrt{\alpha})$ at a particular time $t'' (< t_1)$ reaches at the terminal time $t_1$. It is trivial to show that $x_K=2a\alpha + 2\sqrt{\alpha}$ and the time $t''$ of crossing $K=E$ can also be computed from the following equation
% It is straightforward to show that $x_K=2a\alpha + 2\sqrt{\alpha}$. The time $t''$ of passing $K=E$ can also be computed as 
\begin{equation}
\label{eq:case_b_t''}
t_1 -t'' = \frac{1}{2a}\ln(2a\sqrt{\alpha}+1).
\end{equation}
\subsubsection*{\underline{Subcase B-1} ($(x_{t_0}^\star,p_{t_0}^\star)$ is in the yellow region or on the green curve)}
Notice that the solution $(\bar{x}_t, \bar{p}_t)$ to $f_1$ with boundary conditions $x_{t_0}=x_0$ and $p_{t_1}=0$ is still given by \eqref{eqn:xp_case1}.
If the condition $\bar{x}_{t_1}\leq x_K$ is satisfied, then the optimal solution lies entirely in the yellow region and is given by \eqref{eqn:xp_case1}. In addition, switching does not occur in this case. 
% If $\bar{x}_{t_1}\leq x_K$ occurs, the optimal solution is entirely in the yellow region and is characterized by \eqref{eqn:xp_case1}. There is no switching.
\subsubsection*{\underline{Subcase B-2} ($(x_{t_0}^\star,p_{t_0}^\star)$ is in the blue region)}
This scenario arises when $\bar{x}_{t_0}\bar{p}_{t_0}\leq \alpha$, $\bar{x}_{t_1}> x_K$ and $x_0>\sqrt{\alpha}$. The optimal solution lies entirely in the light blue region and is given by \eqref{eqn:xp_case1}. In this case too, switching does not occur.
\subsubsection*{\underline{Subcase B-3} ($(x_{t_0}^\star,p_{t_0}^\star)$ is on the orange curve)}
This scenario arises when \eqref{eqn:xp_case1} satisfies $\bar{x}_{t_0}\bar{p}_{t_0}\leq \alpha$, $\bar{x}_{t_1}> x_K$ and $x_0 \leq \sqrt{\alpha}$. In this case, the particular solution \eqref{eqn:xp_case1} for $t_0 \leq t \leq t_1$ does not lie in Region 1 and thus it is not a reasonable solution to the boundary problem of our interest. The optimal solution in this scenario is represented by green and orange curves as shown in Fig.~\ref{fig:case_b_color}. First, the optimal solution follows the orange curve from $x_{t_0}=x_0$ to $x_{t'}=\sqrt{\alpha}$, stays on $E=K$ for $t'\leq t \leq t''$, and then traces the green trajectory from $x_{t''}=\sqrt{\alpha}$ to $x_{t_1}=x_K$.
From \eqref{eqn:x_case1}, the time $t'$ can be computed from
\begin{equation}
\label{eq:case_b_t'}
\sqrt{\alpha}=\frac{(2ax_0+1)e^{2a(t'-t_0)}-1}{2a}.
\end{equation}
Two switches occur in the optimal control input -- a switch at time $t'$ from $u=0$ to $u=u^\star$, and a switch at time $t''$ from $u=u^\star$ to $u=0$.
% a switch from $u=0$ to $u=u^\star$ at time $t'$, and a switch from $u=u^\star$ to $u=0$ at time $t''$.
\subsubsection*{\underline{Subcase B-4} ($(x_{t_0}^\star,p_{t_0}^\star)$ is in the light purple region)}
This scenario arises when \eqref{eqn:xp_case1} satisfies $\bar{x}_{t_1}>x_K$ and $\bar{x}_{t_0}\bar{p}_{t_0}>\alpha$, and equation \eqref{eqn:caseA_switching_time} has a solution $t'$ in $[t_0, t_1]$. In this case, a single switching occurs at $t'$, from the light purple region to the blue region.
\subsubsection*{\underline{Subcase B-5} ($(x_{t_0}^\star,p_{t_0}^\star)$ is on the purple curve)}
This scenario arises when when \eqref{eqn:xp_case1} satisfies $\bar{x}_{t_0}\bar{p}_{t_0}>\alpha$ and $\bar{x}_{t_1}>x_K$, and equation \eqref{eqn:caseA_switching_time} does not have a solution $t'$ in $[t_0, t_1]$.
In this case, the optimal solution traces the purple trajectory from $x_{t_0}=x_0$ to $x_{t'}=\sqrt{\alpha}$, stays on $E=K$ for $t$ (where $t'\leq t \leq t''$), and then traces the green trajectory from $x_{t''}=\sqrt{\alpha}$ to $x_{t_1}=x_K$.
From \eqref{eqn:region3_x}, the time $t'$ can be computed from
\begin{equation}
\label{eq:case_b5_t'}
\sqrt{\alpha}=a+c-\frac{2c}{k_3e^{2 ct'}+1}
\end{equation}
where $k_3=\frac{c-a+x_0}{c+a-x_0}e^{-2ct_0}$. The optimal control input switches twice:  a switch at time $t'$ from $u=1$ to $u=u^\star$, and a switch at $t''$ from $u=u^\star$ to $u=0$.

\subsubsection{Case C}
In this scenario, the initial state-costate pair $(x_{t_0}^\star,p_{t_0}^\star)$ of the optimal solution can belong to four different regions as shown by different colors in Fig.~\ref{fig:case_c_color}. The boundaries of the blue region are characterized by $\mathcal{S}$ and the separatrices converging to $E$.
\subsubsection*{\underline{Subcase C-1} ($(x_{t_0}^\star,p_{t_0}^\star)$ is in the yellow region)} Let $x_K=2a\alpha+2\sqrt{\alpha}$ be the $x$-coordinate shown in Fig.~\ref{fig:case_c_color}.
Consider once again the trajectory \eqref{eqn:xp_case1} solving $f_1$ with the boundary constraints $p_{t_1}=0$ and $x_{t_0}=x_0$. If we have $\bar{x}_{t_1}\leq x_K$, then the optimal solution lies entirely in the yellow region and hence in this case, no switching occurs.
\subsubsection*{\underline{Subcase C-2} ($(x_{t_0}^\star,p_{t_0}^\star)$ is in the green region)} If $\bar{x}_{t_0}\bar{p}_{t_0}\leq \alpha$, $\bar{x}_{t_1}> x_K$ and $x_0 > \sqrt{\alpha}$ , then the trajectory \eqref{eqn:xp_case1} is lies in the green region entirely and therefore switching does not occur.
\subsubsection*{\underline{Subcase C-3} ($(x_{t_0}^\star,p_{t_0}^\star)$ is in the blue region)}
In this scenario, a switching occurs only once. Let $t'\in (t_0, t_1)$ be the switching time. Then the optimal solution traces the trajectory of the form
\begin{subequations}
\begin{align}
      &\hat{x}_t=a+c-\frac{2c}{k_3e^{2 ct}+1}, \\
      & \hat{p}_{t}= k_4\frac{\left(k_3e^{2 ct}+1\right)^2}{e^{2 ct}}+ \frac{(1+\alpha)\left(k_3e^{2 ct}+1\right)}{2ck_3e^{2 ct}}
\end{align}
\label{eqn:x_hat}
\end{subequations}
for $t_0\leq t \leq t'$, and 
\begin{align}
   \bar{p}_t=\frac{e^{-2a(t-t_1)}-1}{2a} 
   \label{eqn:p_bar}
\end{align}
for $t'\leq t\leq t_1$. Hence, the Subcase C-3 occurs only if the following set of equations in terms of unknowns $p_{t_0}, k_3, k_4$ and $t'$ has a solution that satisfies $x_0 p_{t_0} > \alpha$ and $t_0 < t' < t_1$:
\begin{subequations}
\label{eq:c3_condition}
\begin{align}
    &x_{t_0}^\star=x_0=a+c-\frac{2c}{k_3e^{2ct_0}+1} \label{eq:c3_condition1} \\
    & p_{t_0}^\star=k_4\frac{\left(k_3e^{2 ct_0}+1\right)^2}{e^{2 ct_0}}+ \frac{(1+\alpha)\left(k_3e^{2 ct_0}+1\right)}{2ck_3e^{2 ct_0}} \label{eq:c3_condition2} \\
    &\left(a+c-\frac{2c}{k_3e^{2 ct'}+1}\right)\left(\frac{e^{-2a(t'-t_1)}-1}{2a}\right)=\alpha \label{eq:c3_condition3} \\
    &k_4\frac{\left(k_3e^{2 ct'}+1\right)^2}{e^{2 ct'}}+ \frac{(1+\alpha)\left(k_3e^{2 ct'}+1\right)}{2ck_3e^{2 ct'}} \nonumber \\
    &\hspace{28ex}=\frac{e^{-2a(t'-t_1)}-1}{2a} \label{eq:c3_condition4}
\end{align}
\end{subequations}
The condition \eqref{eq:c3_condition3} guarantees that $\hat{x}_{t'}\bar{p}_{t'}=\alpha$. Further, \eqref{eq:c3_condition4} ensures that $\hat{p}_{t'}=\bar{p}_{t'}$ (i.e., the transition from $\widehat{p}_t$ to $\bar{p}_t$ is continuous).
\subsubsection*{\underline{Subcase C-4} ($(x_{t_0}^\star,p_{t_0}^\star)$ is in the purple region)}
Switching occurs twice in this case. Let $t'$ and $t''$ be the first and the second switching times. The optimal solution follows the trajectory of the form
\[
\tilde{x}_t=\frac{k_1e^{2at}-1}{2a},\quad \tilde{p}_t=\frac{k_2e^{-2at}-1}{2a}
\]
for $t_0\leq t \leq t'$ and satisfies \eqref{eqn:x_hat}
% \begin{align*}
%       &\hat{x}_t=a+c-\frac{2c}{k_3e^{2 ct}+1}, \\
%       & \hat{p}_{t}= k_4\frac{\left(k_3e^{2 ct}+1\right)^2}{e^{2 ct}}+ \frac{(1+\alpha)\left(k_3e^{2 ct}+1\right)}{2ck_3e^{2 ct}}
% \end{align*}
for $t'\leq t \leq t''$. For $t''\leq t_1$, the $p$-coordinate of the optimal solution satisfies \eqref{eqn:p_bar}.
% \[
% \bar{p}_t=\frac{e^{-2a(t-t_1)}-1}{2a}.
% \]
Hence, Subcase C-4 occurs only if the following set of equations in terms of unknowns $k_1, k_2, k_3, k_4, p_{t_0}, t'$ and $t''$ has a solution such that $t_0 \leq t' < t'' < t_1$ and $x_0p_{t_0}\leq \alpha$:
\begin{subequations}
\label{eq:c4_condition}
\begin{align}
    &x_{t_0}^\star=x_0=\frac{k_1e^{2at_0}-1}{2a} \label{eq:c4_condition1} \\
    &p_{t_0}^\star=\frac{k_2e^{-2at_0}-1}{2a} \label{eq:c4_condition2} \\ 
    &\left(\frac{k_1e^{2at'}-1}{2a}\right)\left(\frac{k_2e^{-2at'}-1}{2a}\right)=\alpha \label{eq:c4_condition3} \\
    &\frac{k_1e^{2at'}-1}{2a}=a+c-\frac{2c}{k_3e^{2 ct'}+1} \label{eq:c4_condition4} \\
    &\frac{k_2e^{-2at'}-1}{2a}=k_4\frac{\left(k_3e^{2 ct'}+1\right)^2}{e^{2 ct'}}+\nonumber\\
    &\quad\quad\quad\quad\quad\quad\quad\quad\quad\frac{(1+\alpha)\left(k_3e^{2 ct'}+1\right)}{2ck_3e^{2 ct'}} \label{eq:c4_condition5} \\
    &\left(a+c-\frac{2c}{k_3e^{2 ct''}+1}\right)\left(\frac{e^{-2a(t''-t_1)}-1}{2a}\right)=\alpha \label{eq:c4_condition6} \\
    &k_4\frac{\left(k_3e^{2 ct''}+1\right)^2}{e^{2 ct''}}+ \frac{(1+\alpha)\left(k_3e^{2 ct''}+1\right)}{2ck_3e^{2 ct''}} \nonumber \\
    &\hspace{28ex}=\frac{e^{-2a(t''-t_1)}-1}{2a} \label{eq:c4_condition7}
\end{align}
\end{subequations}
Conditions \eqref{eq:c4_condition3} and \eqref{eq:c4_condition6} ensure that $\tilde{x}_{t'}\tilde{p}_{t'}=\alpha$ and $\bar{x}_{t''}\hat{p}_{t''}=\alpha$  (i.e., switching occurs on $\mathcal{S}$). Conditions \eqref{eq:c4_condition4}, \eqref{eq:c4_condition5} and \eqref{eq:c4_condition7} guarantees that the trajectory is continuous at times when switching occurs.

% \begin{algorithm}[H]
%  \caption{Optimal solution for case C}
%  %\small
% \hspace*{\algorithmicindent} \textbf{Input:} $a<0$, $\alpha$, $x_0$, $t_0$ and $t_1$  \\
%  \hspace*{\algorithmicindent} \textbf{Output:} $u_t^\star$
% \begin{algorithmic}[1]
% \State Compute $(\bar{x}_{t_0}, \bar{p}_{t_0})$ and $\bar{x}_{t_1}$ from \eqref{eqn:xp_case1};
% \If {$\bar{x}_{t_1}\leq x_K$}\Comment{Subcase C-1}
%     \State $u_t^\star\gets\{0_{[t_0,t_1]}\}$;
% \ElsIf{$\bar{x}_{t_0}\bar{p}_{t_0}<\alpha$ and $x_0>\sqrt{\alpha}$}\Comment{Subcase C-2}
%     \State $u_t^\star\gets\{0_{[t_0,t_1]}\}$;
% \ElsIf{\eqref{eq:c3_condition} has a solution}\Comment{Subcase C-3}
%     \State Compute $t'$ from \eqref{eq:c3_condition};
%     \State $u_t^\star\gets\{1_{[t_0,t']}, 0_{[t',t_1]}\}$;
% \Else\Comment{Subcase C-4}
%     \State Compute $t'$ and $t''$ from \eqref{eq:c4_condition};
%     \State $u_t^\star\gets\{0_{[t_0,t']},1_{[t',t'']},0_{[t'',t_1]}\}$;
%       \EndIf
%  \end{algorithmic}
% \label{alg:case_b}
% \end{algorithm}
}

\section{Conclusion and Future work\label{sec:conclusion}}
{
In this paper, we formulated a finite-horizon optimal sensor gain control problem for minimum-information estimation of continuous-time Gauss-Markov processes.
We established the existence of an optimal control based on Filippov's theorem and a necessary optimality condition based on Pontryagin's minimum principle. For scalar cases, we computed the optimal solution explicitly and showed that the optimal control is piecewise constant with switching at most twice.
Future work includes the computation of the optimal control for vector cases, analysis of the solution for infinite-horizon problems, and the applications of the obtained results to sensor data compression problems.
}

 \bibliography{main.bib}
%  \appendix
\end{document}